\newif\ifacm
\newif\ifnotacm
\newcommand{\bra}[1]{\langle #1|}
\newcommand{\ket}[1]{|#1\rangle}
\newcommand{\proj}[1]{|#1\rangle\langle #1|}
\newcommand{\braket}[2]{\langle #1|#2\rangle}
\newcommand{\hsip}[2]{\langle #1,#2 \rangle}
\DeclareMathOperator{\conv}{conv}
\DeclareMathOperator{\poly}{poly}
\DeclareMathOperator{\polylog}{polylog}
\DeclareMathOperator{\tr}{tr}
\DeclareMathOperator{\supp}{supp}
\DeclareMathOperator{\Sym}{Sym}
\DeclareMathOperator{\Sep}{Sep}
\DeclareMathOperator{\SepSym}{SepSym}
\DeclareMathOperator{\ProdSym}{ProdSym}
\DeclareMathOperator{\SEP}{SEP}
\DeclareMathOperator{\PPT}{PPT}
\DeclareMathOperator{\WMEM}{WMEM}
\DeclareMathOperator{\BPP}{\mathsf{BPP}}
\DeclareMathOperator{\BQP}{\mathsf{BQP}}
\DeclareMathOperator{\DTIME}{\mathsf{DTIME}}
\DeclareMathOperator{\NTIME}{\mathsf{NTIME}}
\DeclareMathOperator{\MA}{\mathsf{MA}}
\DeclareMathOperator{\NP}{\mathsf{NP}}
\DeclareMathOperator{\NEXP}{\mathsf{NEXP}}
\DeclareMathOperator{\Ptime}{\mathsf{P}}
\DeclareMathOperator{\QMA}{\mathsf{QMA}}
\DeclareMathOperator{\BellQMA}{\mathsf{BellQMA}}
\def\sat{\text{3-SAT}}
\def\be#1\ee{\begin{equation}#1\end{equation}}
\def\bea#1\eea{\begin{eqnarray}#1\end{eqnarray}}
\def\beas#1\eeas{\begin{eqnarray*}#1\end{eqnarray*}}
\def\ba#1\ea{\begin{align}#1\end{align}}
\def\bas#1\eas{\begin{align*}#1\end{align*}}
\def\nn{\nonumber}
\def\eq#1{Eq.~(\ref{eq:#1})}
\def\bit{\begin{itemize}}
\def\eit{\end{itemize}}
\def\L{\left} 
\def\R{\right}
\def\ra{\rightarrow}
\def\ot{\otimes}
\newtheorem{thm}{Theorem}
\newtheorem*{thm*}{Theorem}
\newtheorem{cor}[thm]{Corollary}
\newtheorem{lem}[thm]{Lemma}
\newtheorem{prop}[thm]{Proposition}
\newtheorem{dfn}{Definition}
\newtheorem{proto}{Protocol}
\newtheorem{rep@theorem}{\rep@title}
\newtheorem*{rep@theorem}{\rep@title}
\newcommand{\newreptheorem}[2]{%
\newenvironment{rep#1}[1]{%
 \def\rep@title{#2 \ref{##1} (restatement)}%
 \begin{rep@theorem}[]}%
 {\end{rep@theorem}}}
\theoremstyle{remark}
\newtheorem*{remark}{Remark}
\def\eps{\epsilon}
\def\cB{\mathcal{B}}
\def\cD{\mathcal{D}}
\def\cF{\mathcal{F}}
\def\cN{\mathcal{N}}
\def\cS{\mathcal{S}}
\def\bbC{\mathbb{C}}
\def\bbE{\mathbb{E}}
\def\bbM{\mathbb{M}}
\def\bbN{\mathbb{N}}
\def\bbR{\mathbb{R}}
\def\benum{\begin{enumerate}}
\def\eenum{\end{enumerate}}
\def\bit{\begin{itemize}}
\def\eit{\end{itemize}}
\newcommand{\secref}[1]{Section~\ref{sec:#1}}
\newcommand{\appref}[1]{Appendix~\ref{sec:#1}}
\newcommand{\lemref}[1]{Lemma~\ref{lem:#1}}
\newcommand{\thmref}[1]{Theorem~\ref{thm:#1}}
\newcommand{\protoref}[1]{Protocol~\ref{proto:#1}}
\newcommand{\corref}[1]{Corollary~\ref{cor:#1}}
\newcommand{\boxproto}[2]{
\begin{figure}[h]
\begin{center}
\noindent \framebox{
\begin{minipage}{0.8\textwidth}
\begin{proto}[{\bf #1}]
\ \\ \\
#2
\end{proto}
\end{minipage}
}
\end{center}
\end{figure}
}
\def\Ptest{P_{\text{test}}}
\DeclareMathOperator{\Pprod}{OPP}
\begin{document}

\hfuzz=6pt

\title{Testing product states, quantum Merlin-Arthur games and tensor optimisation}

\ifacm
\author{ARAM W. HARROW \affil{University of Washington} and ASHLEY  MONTANARO \affil{University of Cambridge}}
\markboth{A. W. Harrow and A. Montanaro}{Testing product states}
\else
\author{Aram W.\ Harrow\footnote{Department of Computer Science \&  Engineering, University of Washington; {\tt aram@cs.washington.edu}.}\; and Ashley Montanaro\footnote{Department of Applied Mathematics and Theoretical Physics, University of Cambridge; {\tt am994@cam.ac.uk}.}}
\maketitle
\fi

\begin{abstract}
  We give a test that can distinguish efficiently between product
  states of $n$ quantum systems and states which are far from
  product. If applied to a state $\ket{\psi}$ whose maximum overlap
  with a product state is $1-\epsilon$, the test passes with
  probability $1-\Theta(\epsilon)$, regardless of $n$ or the local
  dimensions of the individual systems.  The test uses two copies of
  $\ket{\psi}$. We prove correctness of this test as a special case of
  a more general result regarding stability of maximum output purity
  of the depolarising channel.

  A key application of the test is to quantum Merlin-Arthur games with
  multiple Merlins, where we obtain several structural results that had been previously conjectured,
  including the fact that efficient soundness amplification is possible and that
  two Merlins can simulate many Merlins: $\QMA(k)=\QMA(2)$ for $k\ge 2$.
  Building on a previous
  result of Aaronson et al., this implies that there is an efficient
  quantum algorithm to verify $\sat$ with constant soundness, given
  two unentangled proofs of $\widetilde{O}(\sqrt{n})$ qubits. 
We also show how $\QMA(2)$ with log-sized proofs is equivalent to a
large number of problems, some related to quantum information (such as testing separability of mixed
states) as well as problems without any apparent connection to quantum
mechanics (such as computing injective tensor norms of 3-index
tensors).  As a consequence, we obtain many hardness-of-approximation
results, as well as potential algorithmic applications of methods for
approximating $\QMA(2)$ acceptance probabilities.

  Finally, our test can also be used to construct an efficient test
  for determining whether a unitary operator is a tensor product,
  which is a generalisation of classical linearity testing.
\end{abstract}

\ifacm
\maketitle
\fi


\section{Introduction}
\label{sec:introduction}

Entanglement of quantum states presents both an opportunity and a
difficulty for quantum computing. To describe a pure state of $n$
qudits ($d$-dimensional quantum systems) requires a comparable number
of parameters to a classical probability distribution on $d^n$ items.
Effective methods are known for testing properties of probability
distributions. However, for quantum states many of these tools no
longer work. For example, due to interference, the probability of a
test passing cannot be simply written as an average over components of
the state.  Moreover, measuring one part of a state and conditioning on
the measurement outcome may induce
entanglement between other parts of the state that were not previously
entangled with each other.

These counter-intuitive properties of entanglement account for many of
the outstanding puzzles in quantum information.  In quantum
channel coding, the famous additivity violations of
\cite{DSS97,Hastings} reflect how entangled inputs can sometimes have
advantages against even uncorrelated noise.  For quantum interactive
proofs, the primary difficulty is in bounding the ability of provers
to cheat using entangled strategies~\cite{IKM08}.  Even for $\QMA(k)$
(the variant of $\QMA$ with $k$ unentangled Merlins
\cite{kobayashi03,Unentanglement}), most important open questions could be
resolved by finding a way to control entanglement within each proof.
Here, the recently discovered failure of parallel repetition for
entangled provers~\cite{KR09} is a sort of complexity-theoretic
analogue of additivity violations.

The situation is different when we consider quantum states that are
{\em product} across the $n$ systems.  In this case, while individual
systems of course behave quantumly, the lack of correlation between
the systems means that classical tools such as Chernoff bounds can be
used.  For example, in channel coding with product-state inputs, not
only does the single-letter Holevo formula give the capacity, so that
there is no additivity problem, but so-called strong converse theorems are
known, which prove that attempting to communicate at a rate above the
capacity results in an exponentially decreasing probability of
successfully transmitting a message \cite{ON99,winter99}.  Naturally,
many of the difficulties in dealing with entangled proofs and quantum
parallel repetition would also go away if quantum states were
constrained to be in product form.


\subsection{Our results}

In this paper, we present a quantum test to determine whether an
$n$-partite state $\ket{\psi}$ is a product state or far from any
product state.  We make no assumptions about the local dimensions of
$\ket{\psi}$; in fact, the local dimension can even be different for
different systems.  The test passes with certainty if $\ket{\psi}$ is
product, and fails with probability $\Theta(\epsilon)$ if the overlap
between $\ket{\psi}$ and the closest product state is $1-\epsilon$.
An essential feature of our test (or any possible such test, as we
will argue in Section \ref{sec:optimal}) is that it requires two
copies of $\ket{\psi}$.

The parameters of our test resemble classical property-testing
algorithms~\cite{fischer01}. In general, these algorithms make a small number of queries to
some object and accept with high probability if the object has some property
$P$ ({\em completeness}), and with low probability if the object is ``far''
from having property $P$ ({\em soundness}).  Crucially, the number of queries used and
the success probability should not depend on the size of the object.
The main result of this paper is a test for a property of a quantum
state, in contrast to previous work on quantum generalisations of
property testing, which has considered quantum algorithms for testing
properties of classical (e.g.\ \cite{property-jnl,atici07})
and quantum~\cite{qboolean} oracles (a.k.a.\ unitary operators,
although see \secref{unitaries} for an application to this setting).
In this sense, our work is closer to a body
of research on determining properties of quantum states directly,
without performing full tomography (e.g.\  the ``pretty good tomography'' of
Aaronson~\cite{Aaronson07}).
The direct detection of quantities relating to entanglement has
received particular attention; see~\cite{guhne09} for an extensive
review.  However, previous work has generally focused on Bell
inequalities and entanglement witnesses, which are typically designed
to distinguish a {\em particular} entangled state from any separable
state.  By contrast, our product test is generic and will detect
entanglement in any entangled state $\ket\psi$.

The product test is defined in \protoref{prodtest} below, and illustrated
schematically in Figure \ref{fig:prodtest}.
It uses as a subroutine the {\em swap test} for comparing quantum states
\cite{buhrman01}. This test, which can be implemented efficiently,
takes two (possibly mixed) states $\rho$, $\sigma$ of equal dimension as input. The test uses an ancilla qubit initialised in state $\ket{0}$ and applies a Hadamard gate to this qubit to produce the state $\proj{+}\otimes \rho \otimes \sigma$. The test proceeds by applying a controlled-SWAP operation to the latter two registers, controlled by the ancilla qubit, then applies a Hadamard gate on the ancilla qubit, followed by a computational basis measurement. If the outcome is 0, the output of the test is ``same''; otherwise, the output is ``different''. It is easy to show that this test outputs ``same'' with probability $\frac{1}{2} + \frac{1}{2}\tr \rho\,\sigma$.

\boxproto{Product test}{
\label{proto:prodtest}
The product test proceeds as follows.
\begin{enumerate}
\item Prepare two copies of $\ket{\psi} \in \bbC^{d_1} \ot \cdots \ot \bbC^{d_n}$; call these $\ket{\psi_1}$,
  $\ket{\psi_2}$. 
\item Perform the swap test on each of the $n$ pairs of corresponding subsystems of
  $\ket{\psi_1}$, $\ket{\psi_2}$. 
\item If all of the tests returned ``same'', accept. Otherwise, reject. 
\end{enumerate}
}

The product test has appeared before in the literature. It
was originally introduced in \cite{mintert05} as one of a family of
tests for generalisations of the concurrence entanglement measure, and
has been implemented experimentally as a means of detecting bipartite
entanglement directly \cite{walborn06} (but see also \cite{vELK07} for
caveats).  Further, the test was
proposed in \cite{qboolean} as a means of determining whether a
unitary operator is product.
Our contribution here is to prove the
correctness of this test for all $n$, as formalised in the following theorem.

\begin{thm}
\label{thm:prodtest}
Given $\ket{\psi}\in B(\bbC^{d_1} \ot \cdots \ot \bbC^{d_n})$, let
\[ 1 - \eps = \max\{ |\braket{\psi}{\phi_1,\ldots,\phi_n}|^2 :
\ket{\phi_i}\in B(\bbC^{d_i}), 1 \le i \le n\}.\]
Let $\Ptest(\proj{\psi})$ be the probability that the product test passes when applied to $\ket{\psi}$. Then 
\[ 1 - 2\eps + \eps^2 \le \Ptest(\proj{\psi}) \le 1 - \eps + \eps^2 +
\eps^{3/2}.\]
For some values of $\eps$ this bound is trivial, but we have the
following weaker bound which applies everywhere:
\be \Ptest(\proj{\psi}) \le 1 - \frac{11}{512}\eps.  \label{eq:prod-test-linear-bound}
\ee
%
More concisely, $\Ptest(\proj\psi) = 1-\Theta(\eps)$.
\end{thm}

This result is essentially best possible, in a number of ways. First,
we show in Section \ref{sec:optimal} that the product test itself is
optimal: among all tests for product states that use two copies and
have perfect completeness, the product test has optimal soundness. We
also show that there cannot exist any non-trivial test that uses only
one copy of the test state. Second, our analysis of the test cannot be
improved too much, without introducing dependence on $n$ and the local
dimensions. When $\epsilon$ is low, we give examples of states
$\ket{\psi}$ which achieve the upper and lower bounds on
$\Ptest(\proj{\psi})$, up to leading order. We also give an example of
a bipartite state for which $\epsilon$ is close to 1, but
$\Ptest(\proj{\psi}) \approx 1/2$, implying that the constant in our
bound cannot be replaced with a function of $\epsilon$ that goes to 0
as $\epsilon$ approaches 1. (The bounds on this constant obtained from
our proof could easily be improved somewhat, but we have not attempted
to do this.) See Appendix \ref{sec:prodtest} for all these
examples. Finally, it is unlikely that a similar test could be
developed for separability of {\em mixed} states, as the separability
problem for mixed states has been shown to be $\NP$-hard \cite{gurvits03,gharibian10} (and indeed we improve on this result, as discussed below).

The proof of Theorem \ref{thm:prodtest} is based on relating the probability of the test passing to
the action of the qudit depolarising channel. In fact, we prove a
considerably more general result regarding this channel. It is known
that the maximum output purity of this channel is
achieved for product state inputs \cite{AHW00}; our result, informally, says that
any state that is ``close'' to achieving maximum output purity must in
fact be ``close'' to a product state. This is a {\em stability} result
for this channel, which strengthens the previously known multiplicativity result.

\begin{figure}
\begin{center}
\begin{tikzpicture}[scale=1.25]

\filldraw[fill=gray!10,rounded corners] (-0.6,-0.4) rectangle (4.6,0.4);
\filldraw[fill=gray!10,rounded corners] (-0.6,0.6) rectangle (4.6,1.4);

\draw[rounded corners] (-0.4,-0.6) rectangle (0.4,1.6);
\draw[rounded corners] (0.6,-0.6) rectangle (1.4,1.6);
\draw[rounded corners] (1.6,-0.6) rectangle (2.4,1.6);
\draw[rounded corners] (3.6,-0.6) rectangle (4.4,1.6);

\foreach \x in {1,...,3} {
  \filldraw[fill=gray!75] (\x-1,0) node {\x} circle (0.25);
  \filldraw[fill=gray!50] (\x-1,1) node {\x} circle (0.25);
}
\node at (3,0) {\Huge ...};
\node at (3,1) {\Huge ...};
\filldraw[fill=gray!75] (4,0) node {$n$} circle (0.25);
\filldraw[fill=gray!50] (4,1) node {$n$} circle (0.25);
\node[anchor=east] at (-0.8,1) {$\ket{\psi_1}$};
\node[anchor=east] at (-0.8,0) {$\ket{\psi_2}$};
\end{tikzpicture}

\caption{Schematic of the product test applied to an $n$-partite state $\ket{\psi}$. The swap test (vertical boxes) is applied to the $n$ pairs of corresponding subsystems of two copies of $\ket{\psi}$ (horizontal boxes).}

\label{fig:prodtest}
\end{center}
\end{figure}
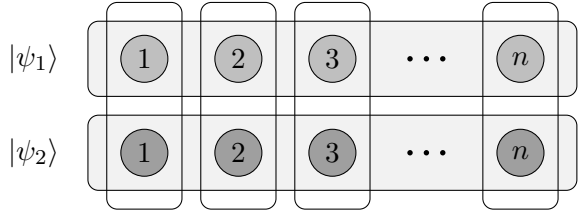

Somewhat more formally, let $\mathcal{D}_{\delta}$ be the
$d$-dimensional qudit depolarising channel with noise rate $1-\delta$, i.e.\
\be \label{eqn:depolarising} \mathcal{D}_{\delta} (\rho) = (1-\delta)(\tr \rho)\frac{I}{d} + \delta\,\rho \ee
for $\rho$ a arbitrary mixed state of one $d$-dimensional system, and
 define the {\em O}utput {\em P}urity of {\em P}roduct states to be
\be \Pprod(\delta) = \tr (\mathcal{D}_\delta^{\ot n}\, \proj{\phi})^2
\label{eq:OPP}\ee
where $\ket{\phi}$ is an arbitrary product state.   
Then our main result, stated more precisely as Theorem \ref{thm:stability}
in Section \ref{sec:depolar}, is that for small enough $\delta > 0$, if
$\tr (\mathcal{D}_\delta^{\ot n}\, \proj{\psi})^2 \ge
 (1-\epsilon)\Pprod(\delta)$,
then there is a product state $\ket{\phi_1,\dots,\phi_n}$ such that
$|\braket{\psi}{\phi_1,\dots,\phi_n}|^2 \geq 1 - O(\epsilon)$.


\subsection{Applications and interpretations of the product test}

We describe several applications of the product test.  The most
important of these is that this test can be used to relate $\QMA(k)$
to $\QMA(2)$, as we will discuss in \secref{QMA}.  The complexity
class $\QMA(k)$ is defined to be the class of languages that can be
decided with bounded error by a poly-time quantum verifier that
receives poly-size witnesses from $k$ unentangled
provers\footnote{We assume throughout this paper that
$k$ is at most polynomial in the input size.}~\cite{kobayashi03,Unentanglement}.  To put $\QMA(k)$ inside
$\QMA(2)$ with constant loss of soundness, we can have two provers
simulate $k$ provers by each submitting $k$ unentangled proofs, whose
lack of entanglement can be verified with our product test.  Indeed,
this gives an alternate way to understand our test as a method of
using bipartite separability to certify $k$-partite separability.

Surprisingly, using this result as a building block also
allows us to prove amplification for $\QMA(k)$ protocols.
It has been conjectured \cite{kobayashi03,Unentanglement}
that such protocols can be amplified to exponentially small soundness
error. We completely resolve this conjecture, showing that $\QMA(k)$
protocols can be simulated in $\QMA(2)$ with exponentially small soundness error,
and hence $\QMA(k) = \QMA(2)$ for $k\ge 2$. Indeed, we show that this result
still holds if the verifier's ``yes'' measurement operator in a $\QMA(2)$ protocol is required to be separable (see Appendix~\ref{sec:measure-defs} for the formal definition of this class of measurements).

As a further corollary, we can improve upon the
results of \cite{Unentanglement,BT09} to 
obtain a protocol in $\QMA(2)$ that verifies $\sat$ with constant soundness gap and
$O(\sqrt{n}\poly\log(n))$ qubits (where $n$ is the number of
clauses). This in turn allows us to prove hardness of approximation results for
19 problems from quantum information theory and elsewhere
which turn out to be closely related to $\QMA(2)$.  The complete list
of equivalent and related problems is given in \secref{equivalent}; while
most had previously been known, we believe that they had not been
previously collected in one place.

One example of such a problem is detecting separability, or in other words
the weak membership problem for $\Sep(d,d)$,
the set of separable quantum states on $d\times d$
dimensions.  It was shown in Ref.~\cite{gurvits03} that $\Sep$ cannot be
approximated to precision $\exp(-d)$ in time $\poly(d)$ unless $\Ptime=\NP$.
In Refs.~\cite{liu07,gharibian10}, this result was improved to
show that approximating $\Sep$ to precision $1/\poly(d)$ is similarly
$\NP$-hard.  We show that there is a universal constant $\delta>0$ such that,
if $K$ is a convex set that approximates
SEP to within trace distance $\delta$, then membership in $K$ cannot be
decided in polynomial time unless $\sat\in
\DTIME(\exp(\sqrt{n}\log^{O(1)}(n)))$.
Other such problems for which we can prove that no polynomial-time algorithm
exists, under the same assumption about the hardness of $\sat$, are estimating
the minimum output entropy of a quantum channel up to a constant, and
estimating the ground-state energy
of quantum systems under a mean-field approximation.

We also prove hardness results for some tensor optimisation problems which
are not apparently related directly to quantum information theory, examples of which
include approximating the injective tensor norm of 3-index tensors and estimating
the $\ell_2 \rightarrow \ell_4$ norm of a matrix. Our proof that amplification
of $\QMA(2)$ protocols is possible implies that one can derive stronger hardness results
for all of these tasks, if one is willing to make stronger assumptions about the hardness of $\sat$.

Our final application is that the product test can be used to
determine whether a unitary operator is a tensor product or far from a tensor product in the Hilbert-Schmidt norm, promised that one of these is the case.
This can be seen \cite{qboolean} as one possible generalisation of the
well-studied problem of testing whether a boolean function $\{0,1\}^n
\rightarrow \{0,1\}$ is linear~\cite{blum93}.  This application is
described in \secref{unitaries}.

These different applications of the product test reflect the many
different interpretations of $\Ptest(\proj\psi)$.  It is related in a precise sense to 
\bit
\item The purity of $\ket{\psi}$ after it is subjected to independent depolarising noise
(see Appendix \ref{sec:depolarising}).
\item The maximum overlap of $\ket{\psi}$ with any product state
(proved in Appendix \ref{sec:prodtest}). 
The logarithm of this maximum overlap is an important entanglement measure known as the geometric
measure of entanglement (see \cite{wei03} and references therein).
\item The overlap of $\ket{\psi}^{\ot 2}$ with the tensor product of
  the symmetric subspaces of $\bbC^{d_1}\ot\bbC^{d_1},\ldots,
 \bbC^{d_n}\ot\bbC^{d_n}$ (discussed in Section \ref{sec:optimal}).
\item The average overlap of $\ket{\psi}$ with a {\em random} product
  state, and a quantum variant of the Gowers uniformity norm \cite{gowers98} (discussed in Appendix \ref{sec:average}).
\item  The
  average purity of $\ket{\psi}$ across a random partition of $[n]$
  into two subsets (also discussed in Appendix \ref{sec:average}).
\eit


\subsection{Implications for classical computer science}
The main result of our paper proposes a quantum solution to a quantum
problem.  Nevertheless, there are several implications of our product
test that may be of interest to classical computing.  Instead of
viewing our results as concerning entangled states of many systems,
they may be interpreted in terms of tensors with many indices.  These
tensors have been studied in the context of image
processing~\cite{aja09tensors}, the planted clique
problem~\cite{BV09}, constraint satisfaction
problems~\cite{delaVegaKKV05} and many other
settings~\cite{vanLoan09}.

In this language, our results in \secref{complex} imply that many
central tensor problems, such as the injective tensor norm (defined in
\secref{equivalent}), are hard to approximate even to within constant
factors.  On the positive side, our \thmref{hSep-equiv} (together with
the equivalences in \secref{equivalent}) implies that if a heuristic
or approximation algorithm existed to optimise over trilinear forms,
it could be extended with little loss of accuracy, to perform
optimisations over $k$-linear forms for general $k$.  These
connections have been further explored in \cite{BBHKSZ12}, which shows
that the $\ell_2\ra\ell_4$ norm of a matrix is hard to approximate,
and connects this problem to the small-set expansion problem.

\subsection{Related work}
Our paper addresses a central problem in multipartite entanglement, which is too vast a field to reasonably summarise here (one good recent survey is \cite{HHH-entanglement}). We therefore concentrate on reviewing work on quantum Merlin-Arthur games with multiple provers.

The class $\QMA(k)$ was first introduced by Kobayashi, Matsumoto, and
Yamakami~\cite{kobayashi03}, who showed that amplification of the
soundness gap of $\QMA(2)$ protocols implies that $\QMA(2)=\QMA(k)$ (a
result proven independently in~\cite{Unentanglement}). Both these
papers also showed that it is possible to amplify {\em completeness}
to exponentially close to 1 (see \lemref{completeness} for a
restatement). Blier and Tapp showed~\cite{BT09} that graph
3-colourability can be decided using a $\QMA(2)$ protocol with
messages of length $O(\log n)$ qubits and soundness
$1-\Omega(1/n^6)$. This soundness gap was improved to
$\Omega(1/n^{3+\epsilon})$ by Beigi~\cite{beigi10}, and has recently
been improved again to $\Omega(1/(n \polylog n))$ by Le Gall, Nakagawa
and Nishimura~\cite{legall11}. By contrast, Aaronson et al.\ 
showed~\cite{Unentanglement} that $\sat$ can be solved by a $\QMA(k)$
protocol with {\em constant} soundness, at the expense of increasing
$k$ to $O(\sqrt{n} \polylog(n))$.  Finally, Liu, Christandl and
Verstraete have given a problem in $\QMA(2)$ which is not obviously in
$\QMA$~\cite{liu07a}.

Following the conference and arXiv versions of this work, there have
been several interesting developments related to $\QMA(k)$.
First, it has been shown by Brand\~{a}o, Christandl and Yard~\cite{brandao11,brandao10a} that $\QMA(k)$ protocols, for constant $k$, are no stronger than $\QMA$ protocols if the verifier's measurement is restricted to be LOCC (implementable via local operations and classical communication). One consequence of their work is that, if there existed an efficient LOCC product state test, $\QMA(k)=\QMA$. However, we show in Appendix \ref{sec:locc} that no such test can exist. In the same work, the authors give a subexponential-time algorithm for optimizing over the set of separable states~\cite{brandao10a}; an alternative algorithm for this task has been given by Shi and Wu~\cite{shi12}, who also prove that several special cases of $\QMA(2)$ protocols can be simulated in polynomial space.

On the other hand, Chen and Drucker~\cite{chen10} have improved on the result of \cite{Unentanglement} and given an LOCC $\QMA(k)$ protocol that verifies $\sat$ with constant soundness gap for $k=O(\sqrt{n}\poly\log(n))$. (In fact, their protocol fits in the more restrictive class known as $\BellQMA(k)$.) Chiesa and Forbes~\cite{chiesa11} recently gave a tight soundness analysis of this protocol, showing that the soundness gap increases smoothly with $k$.

McKague has recently used one of our results (that
the verifier's ``yes'' measurement operator may be taken to be
separable) to prove that restricting the class $\QMA(2)$ to real
Hilbert space does not change its computational
power~\cite{mckague11}.  While it is natural to expect the real case
to behave similarly to the complex case, we note that even for states
with real coefficients the closest product state may be
complex~\cite{CKP00}.
Finally, another application of our results was found by \cite{CS11}, who have presented
the only known nontrivial $\QMA(2)$-complete problem: estimating the
minimum energy of a sparse Hamiltonian over all bipartite product states.

\subsection{Organisation}

The remainder of this paper is organised as follows. In Section
\ref{sec:strategy}, we give an overview of the proofs of our main
results (details are in Appendices \ref{sec:depolarising} and
\ref{sec:prodtest}). In Section \ref{sec:QMA}, we
apply the product test to prove that $\QMA(k)=\QMA(2)$ for $k \ge 2$,
and we give some complexity-theoretic applications of this result in
Section \ref{sec:complex}, including an extensive discussion of
problems related to $\QMA(2)$. In Section \ref{sec:optimal} we argue that the
product test is essentially optimal, and in Section \ref{sec:depolar} we state our
results for the depolarising channel. We discuss the use of the product test
to test product unitaries in Section \ref{sec:unitaries}, and finish
with some open questions in Section \ref{sec:conc}.

\subsection{Notation}\label{sec:notation}
For a vector space $V$, define $B(V)$ to be the unit vectors in $V$, $L(V)$ to be the linear operators from $V$ to $V$, and $\mathcal{B}(V)$ to be the density operators on $V$. More concisely, let
$\cB(d)$ denote the set of $d\times d$ density matrices.
If $\ket\psi$ is a vector, let $\psi:=\proj\psi$.
Define the set of separable states on $\bbC^{d_A} \ot \bbC^{d_B}$ to be
\be
\Sep(d_A,d_B) := \conv\{\alpha
\ot \beta : \ket\alpha\in B(\bbC^{d_A}),\ket\beta\in B(\bbC^{d_B})\},
\ee 
where $\conv(S)$ denotes the convex closure of a set $S$. The swap operator on $\bbC^d \ot \bbC^d$ is denoted $\cF$, and is
defined to be $\sum_{i,j=1}^d \ket{i}\bra{j} \ot \ket j\bra i$.

For $\alpha\ge 1$, let $\|M\|_\alpha$ denote the Schatten $\alpha$-norm
of a matrix: $\tr (|M|^\alpha)^{1/\alpha}$. For a density matrix $\rho \in \mathcal{B}(\bbC^{d_1} \ot \cdots \ot \bbC^{d_n})$, and $S \subseteq \{1,\dots,n\}$, $\rho_S$ denotes the density matrix obtained by tracing out (discarding) the systems not in $S$. To avoid excessive parenthesization, we write $\rho_S^2 := (\rho_S)^2$ and $\tr \rho^2 := \tr (\rho^2)$.

\section{Overview of the proof of correctness}
\label{sec:strategy}

In this section, we sketch the proof of Theorem \ref{thm:prodtest}; a full proof is given in Appendices \ref{sec:depolarising} and \ref{sec:prodtest}.

We discuss here only the upper bound on $\Ptest$, since the lower
bound follows from continuity and the fact that product states pass
the test with probability 1.
First, we make precise the intuition that the product test is likely
to pass precisely when the average subsystem is close to pure.
\newcounter{test}\setcounter{test}{\value{thm}}
\begin{lem}
\label{lem:test}
Let $\Ptest(\rho,\sigma)$ denote the probability that the product test passes when applied to two mixed states $\rho,\sigma \in \mathcal{B}(\bbC^{d_1} \ot \cdots \ot \bbC^{d_n})$. Define $\Ptest(\rho) := \Ptest(\rho,\rho)$. Then
\[ \Ptest(\rho,\sigma) = \frac{1}{2^n} \sum_{S \subseteq [n]} \tr \rho_S \sigma_S, \]
and in particular
\[ \Ptest(\rho) = \frac{1}{2^n} \sum_{S \subseteq [n]} \tr \rho_S^2. \]
\end{lem}


The proof itself is split into two parts, beginning with the case
where $\epsilon$ is low. 
We write $\ket{\psi} = \sqrt{1-\eps} \ket{0^n}
+ \sqrt{\eps} \ket{\phi}$ without loss of generality, for some product
state $\ket{0^n}$ and arbitrary state $\ket{\phi}$. This allows an
explicit expression for $\tr \psi_S^2$ in terms of $\epsilon$ and
$\ket{\phi}$ to be obtained.  While the marginals of $\ket\phi$ can be
complicated (and worse, we need to consider products of expressions of the form
$\tr_{\bar S}\ket 0 \bra\phi$), we can simplify things by only
considering the reductions in $\tr\psi_S^2$ that occur when $\ket 0$
is combined with a state orthogonal to $\ket 0$.    Thus, we do not
need a detailed picture of $\ket\phi$, but instead will merely split
it into a superposition of strings with different Hamming weight
(i.e. number of positions orthogonal to $\ket 0$).  Intuitively, the
contribution to $\bbE_{S\subseteq [n]} \tr\psi_S^2$ of a
piece of $\ket \phi$ with Hamming weight $k$ should be exponentially
small in $k$, since each position that differs from 0 leads to a
constant reduction in weight when we project onto the symmetric
subspace.    In order to
obtain a non-trivial bound from this expression, the final stage of
this part of the proof is to use the fact that $\ket{0^n}$ is the
closest product state to $\ket{\psi}$ to argue that $\ket{\phi}$
cannot have any amplitude on basis states of Hamming weight 0 or 1.
Ruling out basis states of Hamming weight 0 (i.e. $\ket{0^n}$) is
obvious, since otherwise $\eps$ would be smaller.  Less obvious is
that $\ket\phi$ cannot have any amplitude on Hamming weight-1 states,
but this too is contradicted by the fact that $\ket{0^n}$ has overlap
with $\ket\phi$ that is a local
maximum among product states, and nonzero amplitude on weight-1 states
would mean an infinitesimal local rotation could reduce $\eps$.  As a
result, we obtain a bound that is applicable when $\eps$ is small.

\newcounter{prodtestsmalleps}\setcounter{prodtestsmalleps}{\value{thm}}
\begin{thm}
\label{thm:prodtestsmalleps}
Given $\ket{\psi}\in \bbC^{d_1} \ot \cdots \ot \bbC^{d_n}$, let
\[ 1 - \eps = \max\{ |\braket{\psi}{\phi_1,\ldots,\phi_n}|^2 : \ket{\phi_i}\in\bbC^{d_i}, 1 \le i \le n\}.\]
Then $1 - 2\eps + \eps^2 \le \Ptest(\proj{\psi}) \le 1 - \eps + \eps^2 + \eps^{3/2}$.
\end{thm}

In the case where $\epsilon$ is high, this result does not yet give a useful
upper bound. In the second part of the proof, we derive a constant
bound on $\Ptest(\proj{\psi})$ based on considering $\ket{\psi}$ as a $k$-partite
state, for some $k<n$. $\Ptest(\proj{\psi})$ can be shown to be upper
bounded by the probability that  the test for being product across any
partition into $k$ parties passes. Informally speaking, if
$\ket{\psi}$ is far from product across the $n$ subsystems, we show
that one can find a partition such that the distance from the closest
product state (with respect to this partition) falls into the regime
where the first part of the proof works.

\newcounter{prodtestlargeeps}\setcounter{prodtestlargeeps}{\value{thm}}
\begin{thm}
\label{thm:prodtestlargeeps}
Given $\ket{\psi}\in \bbC^{d_1} \ot \cdots \ot \bbC^{d_n}$, let
\[ 1 - \eps = \max\{ |\braket{\psi}{\phi_1,\ldots,\phi_n}|^2 : \ket{\phi_i}\in\bbC^{d_i}, 1 \le i \le n\}.\]
Then, if $\eps \ge 11/32 > 0.343$, $\Ptest(\proj{\psi}) \le 501/512 < 0.979$.
\end{thm}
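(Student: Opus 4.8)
The plan is to reduce the case of large $\eps$ to the small-$\eps$ regime already handled by Theorem~\ref{thm:prodtestsmalleps}, by viewing $\ket\psi$ as a state of $k$ parts rather than $n$ parts for a suitable $k$. By Lemma~\ref{lem:pproduct}, for \emph{any} partition $P$ of $[n]$ into $k$ blocks we have $\Ptest(\proj\psi) \le \Ptest^P(\proj\psi)$, and the right-hand side is exactly the pass probability of the $k$-partite product test applied to $\ket\psi$ regarded as a state on $\bbC^{D_1}\ot\cdots\ot\bbC^{D_k}$, where $D_j$ is the product of the local dimensions in block $j$. So it suffices to exhibit a partition for which the distance of $\ket\psi$ from the closest $P$-product state lands in the range where Theorem~\ref{thm:prodtestsmalleps} gives a bound bounded away from $1$.

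The key step is therefore: \textbf{if $\ket\psi$ is $\eps$-far from every (fully) product state with $\eps \ge 11/32$, then there is a coarsening $P$ of $[n]$ into $k$ blocks such that the overlap of $\ket\psi$ with the closest $P$-product state is some value $1-\eta$ with $\eta$ in a controlled interval}, say $\eta \in [c_1, c_2]$ with $c_1$ a constant and $c_2 < \tfrac12(3-\sqrt5)$ so that Theorem~\ref{thm:prodtestsmalleps} applies and yields $\Ptest^P(\proj\psi) \le 1 - c_1 + c_1^2 + c_1^{3/2} < 1$. To find such a $P$ I would argue by a ``merging'' process: start from the singleton partition (all $n$ blocks), where by hypothesis the closest product-state overlap is $1-\eps \le 21/32$, and repeatedly merge two blocks. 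Merging only \emph{increases} the closest-product-state overlap (a $P'$-product state is automatically $P$-product when $P$ coarsens $P'$), and when we merge all the way down to the trivial one-block partition the overlap is exactly $1$. Moreover each merge cannot increase the overlap by too much: if $\ket\psi$ is $\eta$-far from $P$-product, one can show it is at least (roughly) $\eta/2$-far — or more crudely, bounded below by some explicit fraction of $\eta$ — from $P'$-product for the merge of two blocks, because a $P'$-product state is a $P$-product state in which one bipartite marginal has been forced to be pure, costing at most a bounded multiple of the original defect. (Concretely, if the $P$-closest product state is $\ket{\chi_1}\ot\cdots\ot\ket{\chi_k}$ and we merge blocks $i,i'$, the loss in merging is governed by $1-\lambda_1$ where $\lambda_1$ is the top Schmidt coefficient of $\ket\psi$ across the $i{:}i'$ cut relative to the rest, and Lemma~\ref{lem:bipartite} / a continuity argument via Lemma~\ref{lem:close} controls this.) Hence the sequence of overlaps $1-\eps = v_0 \le v_1 \le \cdots \le v_{k-1}=1$ has bounded gaps and must pass through the target window; pick the first partition where it does.

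I expect the main obstacle to be the quantitative step: nailing down, with honest constants, how much the closest-product-state distance can shrink under a single block-merge, so that the arithmetic of the ``pass through the window'' argument closes with the specific numbers $11/32$ and $501/512$ claimed in the statement (these are visibly the output of optimising the crude bounds — e.g.\ $501/512 = 1 - 11/512$ and $11/512$ looks like $\eps(1-\eps)$-type terms at $\eps$ near $3/8$). Getting a clean bound probably requires choosing \emph{which} two blocks to merge greedily (merge the pair across which $\ket\psi$ is closest to product, i.e.\ whose bipartite entanglement with the rest is smallest), and bounding the per-step loss by a Schmidt-coefficient argument rather than a generic triangle inequality; the triangle-inequality bound via Lemma~\ref{lem:close} alone would lose $\sqrt\eta$ and likely be too lossy to hit the stated constants. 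Once the window is reached, the final estimate is a direct substitution into Theorem~\ref{thm:prodtestsmalleps} together with Lemma~\ref{lem:pproduct}, which is routine.
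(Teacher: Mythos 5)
Your high-level strategy---coarsen the partition via \lemref{pproduct} until the distance to the closest $P$-product state lands in a window where \thmref{prodtestsmalleps} applies---is the same opening move as the paper's proof, and that reduction is fine. The gap is in the step you yourself flag as the main obstacle, and it is not merely a matter of optimising constants: the claim that a single block-merge shrinks the distance to the closest product state by at most a constant factor is false, so the sequence of overlaps $v_0\le v_1\le\cdots\le 1$ does \emph{not} have bounded gaps and can skip any fixed window entirely. Concretely, for $\ket\psi=\ket\Phi=\frac{1}{\sqrt d}\sum_i\ket{i,i}$ (or $\ket\Phi\ot\ket{0}^{\ot (n-2)}$), the distance to the fully product states is $1-1/d$, arbitrarily close to $1$, while after merging the two entangled blocks the distance drops to $0$; no greedy choice of which pair to merge can help, since for $n=2$ there is only one merge available. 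So ``pick the first partition whose overlap enters the window'' may never succeed, and no constant $c_1>0$ makes the per-merge loss bound true.

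The paper's proof confronts exactly this failure mode. It locates the $k$ at which the window $[1-h,1-\ell]$ is jumped over (overlap $>1-\ell$ for some $k$-partite product state but $<1-h$ for every $(k+1)$-partite one), writes $\ket\psi=\sqrt{1-\eps'}\ket{\phi_1}\cdots\ket{\phi_k}+\sqrt{\eps'}\ket\xi$ with $\eps'<\ell$, and deduces that some single block $\ket{\phi_i}$ must itself be far from product across a further bipartition, with $1-\delta<(1-h)/(1-\eps')$. It then invokes \lemref{bipartite} --- the exact bipartite formula $\Ptest=\frac{1}{2}\left(1+\sum_i\lambda_i^2\right)$, which is valid for \emph{all} $\delta$, not just small $\delta$ --- to bound $\Ptest(\proj{\phi_i})$ away from $1$, and transfers this back to $\ket\psi$ via \lemref{close} at a cost of only $\sqrt{\eps'}<\sqrt{1/32}$ (so your worry that the $\sqrt{\eta}$ loss from \lemref{close} is fatal does not materialise; it is absorbed into the constants). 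This use of the bipartite case as an unconditional fallback when the window is skipped is the idea your proposal is missing, and without it the argument does not close.
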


Between them, Theorems \ref{thm:prodtestsmalleps} and
\ref{thm:prodtestlargeeps} imply Theorem \ref{thm:prodtest}.  In fact,
we can say precisely that $\Ptest(\proj\psi) = 1-c(\psi)\eps$ for $\frac{11}{512} \leq c(\psi) \leq 2$.

One feature of our proof that can be generalised is the expectation
over $S\subseteq [n]$.  We effectively choose $S$ by flipping a fair
coin, but if we use a biased coin then this has an interesting
alternate interpretation in terms of the output purity of the
depolarising channel.  This yields a similar result, which is not only
that product states maximise the output purity (as was previously
known), but that any state which even approximately maximises the
output purity must be approximately product.  See \secref{depolar} for
a precise statement.  Since the correctness of the product test is a
special case of this more general theorem, we first prove the result
about depolarising channels in Appendix~\ref{sec:depolarising} and
then complete the details necessary for the product test in
Appendix~\ref{sec:prodtest}.

This completes the overview of the proof; we now discuss some applications of the product test.


\section{\texorpdfstring{$\QMA(2)$}{QMA(2)} vs.\ \texorpdfstring{$\QMA(k)$}{QMA(k)}}
\label{sec:QMA}

In this section, we apply the product test to a problem in quantum complexity theory: whether $k$ unentangled provers are stronger than 2 unentangled provers. This question can be formalised as whether the complexity classes $\QMA(k)$ and $\QMA(2)$ are equal \cite{kobayashi03,Unentanglement}. These classes are defined as follows.

\begin{dfn}
A language $L$ is in $\QMA(k)_{s,c}$ if there exists a polynomial-time quantum algorithm $\mathcal{A}$ such that, for all inputs $x \in \{0,1\}^n$:

\begin{enumerate}
\item {\bf Completeness:} If $x \in L$, there exist $k$ witnesses $\ket{\psi_1},\dots,\ket{\psi_k}$, each a state of $\poly(n)$ qubits, such that $\mathcal{A}$ outputs ``accept'' with probability at least $c$ on input $\ket{x}\ket{\psi_1}\dots\ket{\psi_k}$.

\item {\bf Soundness:} If $x \notin L$, then $\mathcal{A}$ outputs ``accept'' with probability at most $s$ on input $\ket{x}\ket{\psi_1}\dots\ket{\psi_k}$, for all states $\ket{\psi_1},\dots,\ket{\psi_k}$.
\end{enumerate}

We use $\QMA(k)$ as shorthand for $\QMA(k)_{1/3,2/3}$, and $\QMA$ as shorthand for $\QMA(1)$. We always assume $1 \le k \le \poly(n)$.

We also define $\QMA_m(k)_{s,c}$ to indicate that $\ket{\psi_1},\ldots,\ket{\psi_k}$ each involve $m$ qubits, where $m$ may be a function of $n$ other than $\poly(n)$.
\end{dfn}

Two of the major open problems related to $\QMA(k)_{s,c}$ are to
determine how the size of the complexity class depends on $k$ and on
$s,c$.  It has been conjectured for some years \cite{kobayashi03,Unentanglement} that in
fact $\QMA(k)=\QMA(2)$ for $2\leq	k\leq \poly(n)$, and that the soundness and completeness can be amplified by parallel repetition in a way similar to $\BPP$, $\BQP$, $\MA$, $\QMA$
and other complexity classes with bounded error.  In fact, these conjectures are related: $2k$ independent provers can simulate $k$ independent realisations of a $\QMA(2)$ protocol in order to amplify the soundness-completeness gap, and conversely,  \cite{kobayashi03,Unentanglement} proved that $\QMA(2)$ amplification implies that $\QMA(2)=\QMA(\poly)$.
  In this section, we will fully resolve these conjectures, proving that $\QMA(2)=\QMA(\poly)$ and that $\QMA(k)$ can have its soundness and completeness amplified by a suitable protocol.

The most direct way of putting $\QMA(k)$ inside $\QMA(2)$ is to ask two provers to each send the $k$ unentangled proofs that correspond to a $\QMA(k)$ protocol.  If $k=\poly(n)$, then each prover is still sending only polynomially many qubits.  Then the product test can be used to verify that the states sent were indeed product states and can be used as valid inputs to a $\QMA(k)$ protocol.  The specific protocol is described in \protoref{qmak}.

\boxproto{$\QMA(k)$ to $\QMA(2)$}{
\label{proto:qmak}
The $\QMA(2)$ protocol proceeds as follows.
\begin{enumerate}
\item Each of the two Merlins sends $\ket{\psi}:=\ket{\psi_1}\ot\ldots\ot \ket{\psi_k}$ to Arthur.
\item Arthur performs each of the following tests with
  probability $1/2$.
\benum
\item Arthur runs the product test with the two states as input and
  accepts iff the test outputs ``product.''
\item Arthur randomly chooses one of the states from the two Merlins,
  runs the algorithm $\mathcal{A}$ on that state, and outputs the result.
\eenum
\end{enumerate}
}

First observe that for YES instances (instances in the language), $k$
Merlins can achieve success probability $\geq c$, so by sending two
copies of this optimal state, two Merlins can achieve completeness
$\geq \frac{1+c}{2}\geq c$ in this modified protocol.

Now consider NO instances. 
Assume for now that the two Merlins always send the same state.  Then according to \thmref{prodtest}, if the Merlins send states that are far from product, they are likely to fail the product test, whereas basic continuity arguments can show that if they send states that are nearly product then the success probability will not be much larger than the soundness of the original protocol.  Thus, the soundness does not become too much worse.
These ideas (with a detailed proof in \appref{protoproof}) establish
\def\lemqmaksim{
For any $m$, $k$, $0\leq s<c\leq 1$, 
$$\QMA_m(k)_{s,c} \subseteq \QMA_{km}(2)_{s',c'}$$
where $c'=\frac{1+c}{2}$ and $s' = 1- \frac{(1-s)^2}{100}$.
}
\begin{lem}\label{lem:qmak-sim}
\lemqmaksim
\end{lem}

This is already strong enough to achieve amplification up to constant
soundness.  However, \protoref{qmak} has a salutary side effect that
will allow us to achieve stronger amplification.  To see this, we will
first introduce a further generalisation of the $\QMA(k)$ family.  Let
$\bbM$ be a set of Hermitian operators $M$ satisfying $0\leq M\leq I$.
Each $M\in \bbM$ defines a binary measurement with $M$ corresponding
to the ``accept'' outcome and $I-M$ corresponding to the ``reject''
outcome.  Variants of $\QMA(2)$ have been considered in which $M$ is
not only restricted to be efficiently implementable on a quantum
computer, but also with the further restriction that it belongs to
some set $\bbM$.  We will consider standard classes of measurements
such as BELL, LOCC, SEP, ALL, etc., whose definitions we include in
Appendix~\ref{sec:measure-defs}.
 
Formally, define $\QMA_m^{\bbM}(k)_{s,c}$ to be the class
$\QMA_m(k)_{s,c}$ with Arthur restricted to performing measurements
from $\bbM$ in addition to being restricted to quantum polynomial time.  For example, 
 $\QMA^{\text{BELL}}(k)$ has been introduced under the name
 $\BellQMA(k)$ and proven equal to $\QMA$ (for constant $k$) by
 Brand\~{a}o \cite{brandao08,Unentanglement}.  Our paper will focus on
 the case that $\bbM=\SEP$, the class of measurements such that $M$ is
 a separable operator.   Observe that $M\in\SEP$ does not imply that $I-M$ is
 separable.

Armed with the definition of $\QMA^{\SEP}$, we can now see that \protoref{qmak} produces a protocol that is not only in $\QMA(2)$, but also $\QMA^{\SEP}(2)$.  More formally, we can strengthen \lemref{qmak-sim} to:
\begin{lem}\label{lem:qmak-sim-sep}
For any $m$, $k$, $0\leq s<c\leq 1$, 
$$\QMA_m(k)_{s,c} \subseteq \QMA_{km}^{\SEP}(2)_{s',c'}$$
where $c'=\frac{1+c}{2}$ and $s' = 1- \frac{(1-s)^2}{100}$.
\end{lem}

\begin{proof}
We again use \protoref{qmak}.  By \lemref{qmak-sim}, we know that this protocol has completeness $c' = (1+c)/2$, and has soundness $s'=1-\Omega((1-s)^2)$.  It remains only to argue that the ``accept'' measurement outcome is a separable operator.

Suppose the first Merlin sends systems $A_1,\ldots,A_k$ and the second Merlin sends systems $B_1,\ldots,B_k$.
The ``accept'' outcome of the product test corresponds to the tensor
product of projectors onto the symmetric subspaces of $A_1B_1, A_2B_2,
\ldots, A_kB_k$.  Since the symmetric subspace is spanned by vectors
of the form $\ket\psi^{\ot 2}$, it follows that the projector onto each
symmetric subspace is separable across the $A:B$ cut, and in turn that
their tensor product is as well. The other test in the protocol is to
simply apply a measurement either entirely on $A_1,\ldots,A_k$ or
entirely on $B_1,\ldots,B_k$, which is automatically separable.
Finally, performing a probabilistic mixture of separable measurements
creates a composite measurement which is also separable. 
\end{proof}

The advantage of $\QMA^{\SEP}(k)$ is that it removes the chief difficulty with $\QMA(k)$ amplification, which is that conditioning on measurement outcomes can induce entanglement between systems we have not yet measured.  This phenomenon is known as entanglement swapping~\cite{ent-swap}.  However, if we condition on the outcome of a measurement being $M$, for some $M\in \SEP$, then no entanglement will be produced in the unmeasured states.  As a result, cheating provers cannot gain any advantage by sending entangled proofs, and we obtain the following lemma.

\def\lemqmasepampl{
For any $\ell\geq 1$,
$$\QMA_m^{\SEP}(k)_{s,c} \subseteq \QMA_{\ell m}^{\SEP}(k)_{s^\ell,c^\ell}.$$
}
\begin{lem}\label{lem:qma-sep-ampl}
\lemqmasepampl
\end{lem}

The idea is to simply repeat the original protocol $\ell$ times in
parallel and to accept iff each subprotocol accepts.  Since we are
considering $\QMA^{\SEP}$ protocols, obtaining an ``accept'' outcome on one proof will not induce any entanglement on the remaining proofs.  We give a detailed proof of \lemref{qma-sep-ampl} in \appref{protoproof}.

From \lemref{qmak-sim-sep} and \lemref{qma-sep-ampl}, we can almost conclude that strong amplification is possible.  Indeed, when we start with protocols with perfect completeness, we can apply \protoref{qmak}, repeat $p(n)$ times, and reduce the soundness from $s$ to $s^{O(p(n))}$.  For the case of $c<1$, we need one additional argument to keep the completeness from being reduced too much at the same time.  Here we will use a method for completeness amplification proved  in both \cite[Lemma 5]{kobayashi03} and \cite[Lemma 6]{Unentanglement}.
\def\lemcompleteness{
For any $\ell\geq 1$,
$$\QMA_m(k)_{s,c} \subseteq \QMA_{\ell m}(k)_{1-\frac{c-s}{3}, 1-\exp(-\frac{\ell(c-s)^2}{2})}.$$
}
\begin{lem}[\cite{kobayashi03,Unentanglement}]\label{lem:completeness}
\lemcompleteness
\end{lem}
Our amplification procedure for general $c<1$ is then to
\benum
\item Use \lemref{completeness} to bring the completeness exponentially close to 1.
\item Use \lemref{qmak-sim-sep} to convert a general $\QMA(k)$ protocol to a $\QMA^{\SEP}(2)$ protocol.
\item Repeat the protocol polynomially many times to make the soundness exponentially small.
\eenum

This procedure then achieves
\def\thmqma2k{
\benum \item
If $s\leq 1-1/\poly(n)$, $k=\poly(n)$ and $p(n)$ is an arbitrary polynomial, then
$\QMA(k)_{s,1} = \QMA^{\SEP}(2)_{\exp(-p(n)), 1}$.
\item
If $c-s \geq 1/\poly(n)$, $c<1$, $k=\poly(n)$ and $p(n)$ is an arbitrary polynomial, then
$\QMA(k)_{s,c} = \QMA^{\SEP}(2)_{\exp(-p(n)), 1-\exp(-p(n))}$.
\eenum
}
\begin{thm}\label{thm:qma2k}
\thmqma2k
\end{thm}

We prove correctness of Protocol \ref{proto:qmak} and the rest of
\thmref{qma2k} in Appendix \ref{sec:protoproof}.  There are obvious variants of \thmref{qma2k} to cover the case of limited message size, whose statements we leave implicit.


\subsection{\texorpdfstring{$\QMA(2)$}{QMA(2)} and \texorpdfstring{$h_{\Sep}$}{h_Sep}}
\label{sec:qma2hsep}

The complexity of $\QMA_m(2)$ stems both from the complexity of
producing the measurement made by the verifier, and of maximising its
acceptance probability over product states.  To understand the
complexity of this second step, we define the support function of the
separable states to be
\be h_{\Sep(d,d)}(M) := \max\{\tr M\rho : \rho\in\Sep(d,d)\}
= \max\{\tr M(\alpha\ot \beta): \ket\alpha,\ket\beta\in B(\bbC^d)\},
\ee
for any $M\in L(\bbC^d\ot\bbC^d)$.  
Calculating $h_{\Sep}$ up to $1/\poly(d)$ accuracy was proven to be
$\NP$-hard by Gurvits~\cite{gurvits03}.  One of our central results
will be a weaker hardness result for the problem of estimating
$h_{\Sep}$ to {\em constant} additive error; see \thmref{hSep-equiv} below.

Thus, $\QMA_m(2)_{s,c}$ is the class of languages that can be decided by
determining whether $h_{\Sep(2^m,2^m)}(M)$ is $\geq c$ or $\leq s$,
where $M$ is a measurement operator that
can be constructed in polynomial time on a quantum computer.  It is
instructive to compare the case of $\QMA_m(1)$, where the problem can
be thought of as computing the largest eigenvalue of a $2^m\times 2^m$
matrix.  There the hardness comes from the fact that the matrix is
implicitly specified by a polynomial-size quantum circuit.  By
contrast, in the case of $\QMA(2)$, there is no known $\poly(d)$-time
algorithm to compute $h_{\Sep(d,d)}(M)$.
As a result, $\QMA_{\log}(1)=\BQP$~\cite{marriott05}, but $\QMA_{\log}(2)$ is not known
to be in $\BQP$ (since we do not know how to search over unentangled
pairs of $\log(n)$-qubit states in quantum polynomial time) or $\NP$
(since the measurement can depend on a general quantum poly-time
algorithm).   The weakest class that we know contains
$\QMA_{\log}(2)$ is
$\NP^{\BQP}$, by using the $\BQP$ oracle to obtain an
explicit description of $M$.  This can be achieved up to error $\eps$
by running the verifier's circuit $\poly(2^m,1/\eps)$ times and
performing tomography.  We therefore obtain that
$\QMA_m(2)_{s,c} \subseteq \NTIME(\poly(2^m,n,1/(c-s)))^{\BQP}$.  In particular,
 $\QMA(2)\subseteq \NEXP$.
 Unfortunately this cannot be scaled down to place $\QMA_{\log}(2)$
 in $\NP$.  This is because the verifier in a $\QMA_{\log}(2)$
 protocol still can perform a poly-time quantum computation.  Thus,
 we only have that $\QMA_{\log}(2)\subseteq\NP^{\BQP}$.
 
Via the connection between $\QMA_m(2)$ and $h_{\Sep(2^m,2^m)}$, all of the results in this section can be stated in terms of $h_{\Sep}$. In particular, we have the following variants of Lemma \ref{lem:qma-sep-ampl} and Theorem \ref{thm:qma2k}.
 
\begin{lem}\label{lem:hsep-sep-ampl}
Let $M \in SEP$ be a $d^2\times d^2$-dimensional separable Hermitian matrix satisfying
$0\leq M\leq I$. Then $h_{\Sep(d^k,d^k)}(M^{\otimes k}) = h_{\Sep}(M)^k$.
\end{lem} 
 
\begin{thm}\label{thm:hSep-equiv}
Let $M$ be a $d^2\times d^2$-dimensional Hermitian matrix satisfying
$0\leq M\leq I$.  Assume that we are promised that $h_{\Sep(d,d)}(M)$
is either $\geq c$ or $\leq s$ for $1\geq c > s > 0$.  Call these
two cases ``Y'' and ``N.''   Choose $1\geq c'>s'>0$ such that $c'=1$ if
and only if $c=1$.
Then there exists a matrix $M'$ of size
$d^k$ such that 
\be h_{\Sep(d^k,d^k)}(M')  \begin{cases}
\geq c' & \text{in case Y} \\
\leq s' & \text{in case N}
\end{cases}\ee
If $c=1$, then $k = O((1-s)^{-2}\log(1/s'))$, and if $c<1$, then $k =
O((c-s)^{-3}\log(1/(1-c'))\log^2(1/s'))$.  Additionally $M'\in \SEP$,
and $M'$ can be constructed efficiently from $M$, even by a classical
log-space transducer.
\end{thm}


\section{Complexity-theoretic implications}
\label{sec:complex}
\subsection{Evidence for the hardness of \texorpdfstring{$\QMA_{\log}(2)$}{QMA\_log(2)}}
A key application of \thmref{qma2k} is to the protocol of
Ref.~\cite{Unentanglement} that puts \sat\ on $n$ clauses inside the complexity class
$\QMA_{\log(n)}(\sqrt{n}\poly\log(n))_{1-\Omega(1),1}$.
Applying \thmref{qma2k} lets us simulate this using two
provers with perfect completeness and arbitrary soundness, so
that we obtain 
\begin{cor}\label{cor:3-sat}
Let $\ell:\bbN \rightarrow \bbN$ be polynomially bounded. Then
$$\sat \in \QMA_{\ell(n)\sqrt{n}\poly\log(n)}(2)_{2^{-\ell(n)},1}.$$
In other words, there is a protocol for  \sat\ instances with $n$
clauses
that uses two provers,
$\ell(n)\sqrt{n}\poly\log(n)$-qubit proofs and has 
perfect completeness and soundness $2^{-\ell(n)}$. 
\end{cor}

Therefore, making assumptions about the hardness of $\sat$ allows us to prove hardness results
for the complexity class $\QMA_{\log}(2)$, and stronger assumptions naturally
imply stronger hardness results. We formalise this correspondence
as the following corollary.

\begin{cor}\label{cor:3-sat-log}
The following implications hold.
\begin{enumerate}[(i)]
\item If $\sat$ on $n$ clauses is not in $\DTIME(\exp(o(n)))$, then for arbitrary constant $\epsilon > 0$
\[ \QMA_{\log(d)}(2)_{\frac{1}{2},1} \nsubseteq \DTIME(d^{\log^{1-\epsilon} d}). \]

\item If $\sat$ on $n$ clauses is not in $\DTIME(\exp(o(n)))$, then
\[ \QMA_{\log(d)}(2)_{2^{-\sqrt{\log d}/\polylog(\log d)},1} \nsubseteq \DTIME(\poly(d)). \]

\item If $\sat$ on $n$ clauses is not in $\DTIME(\exp(\sqrt{n}\polylog(n)))$, then
\[ \QMA_{\log(d)}(2)_{\frac{1}{2},1} \nsubseteq \DTIME(\poly(d)). \]
\end{enumerate}

More generally, assume that for some functions $\ell,m:\bbN\rightarrow \bbN$, $\sat$ on $n$ clauses is not contained in $\DTIME(m(\exp(\ell(n) \sqrt{n} \polylog(n))))$. Then, defining $d = 2^{\ell(n)\sqrt{n}\polylog(n)}$,
\[ \QMA_{\log(d)}(2)_{2^{-\ell(n)},1} \nsubseteq \DTIME(m(d)). \]
\end{cor}

Note that the assumptions on the hardness of $\sat$ made in the first two cases are essentially equivalent to the (not implausible) {\em Exponential Time Hypothesis} of Impagliazzo and Paturi~\cite{impagliazzo01}, which states that $\sat\not\in\DTIME(\exp(\ell(n)))$ for any $\ell(n) = o(n)$. 

\subsection{\texorpdfstring{$\QMA_{\log}(2)$}{QMA\_log(2)} equivalences and reductions}
\label{sec:equivalent}

We now discuss a number of problems for which Corollary \ref{cor:3-sat-log} allows us to prove hardness results. As described in Section \ref{sec:qma2hsep}, $\QMA_{\log(d)}(2)$ is intimately connected to $h_{\Sep(d,d)}$.
Here we focus solely on the hardness of
estimating $h_{\Sep(d,d)}(M)$ when $0\leq M\leq I$ is given explicitly
as input.  In other words, we will examine the part of the hardness of
$\QMA_{\log}(2)$ that does {\em not} come from having access to a
poly-time quantum computation.

One definition we will repeatedly use is that of the {\em weak
  membership problem}.  If $K$ is a convex set, $\eps>0$ and $d$ is a
metric, then $\WMEM_\eps^{(d)}(K)$ denotes the following task: given
an input $x$, determine whether $x\in K$ or $d(x,K)\geq \eps$, given
the promise that one of these conditions holds.  Here
$d(x,K):=\inf_{y\in K}d(x,y)$.  The reason for the $\eps$ is because
the complexity of the problem can depend on the required precision,
just as the size of $\QMA(k)_{s,c}$ depends on how close $s$ and $c$
are.  See \cite{GLS} for more background and equivalent formulations
of the weak membership problem for convex sets.  In many cases, $d$
will be the trace norm distance; in this case, we will simply write
$\WMEM_\eps(K)$ for the weak membership problem.  We also define
$B_d(K,\eps) := \{x : d(x,K)\leq\eps\}$, and define the Hausdorff
distance between (not necessarily convex) sets $K,L$ to be
$d_H(K,L):=\max\{\sup_{x\in K}d(x,L), \sup_{x\in L}d(x,K)\}$, or
equivalently, $\inf\{\eps\geq 0 : X\subseteq B_d(Y,\eps) \text{ and
}Y\subseteq B_d(X,\eps)\}$.

The following equivalences and reductions are a combination of known
results (\cite{werner02,gurvits03,EHGC04,Mat05,fannes06,liu07,BV09,gharibian11,BBHKSZ12}, and some unpublished and/or folklore)
and consequences of our 
main theorems.  Even though many of the reductions are
straightforward, we are not aware of any similar list in the
literature, despite many of the quantities being discussed
individually.

\noindent{\bf Equivalent problems}
\begin{enumerate}
\item\label{it:hsep}
 Given $M$ with $0\leq M\leq I$, determine whether
\be
h_{\Sep}(M):= \max_{\rho\in\Sep(d,d)} \tr M\rho
 = \max_{\ket\alpha,\ket\beta\in B(\bbC^d)} \tr M(\alpha\ot \beta)
\ee
is $\geq c$ or $\leq s$.
As discussed above, this represents the acceptance probability of a
$\QMA(2)$ protocol when the measurement is fixed and the provers use
an optimal strategy.  

This is our reference problem, and we will
compare the problems below to this one.  However, we observe that this
problem is equivalent (up to a polynomial change of dimension described below) to
versions with different choices of $c$ and $s$ as long as $0<s<c<1$
are constants independent of dimension.
\item\label{it:hprod-sym}
Define ${\rm ProdSym}(d):=\conv\{\psi\ot\psi: \ket\psi\in B(\bbC^d)\}$.
Given $M$ with $0\leq M\leq I$, determine whether $h_{{\rm
    ProdSym}(d)}(M)$ is $\geq c/4$ or $\leq s/4$.
\item\label{it:hsep-sym}
Define ${\rm SepSym}(d):=\conv\{\rho\ot\rho: \rho\in \cB(d)\}$.
Given $M$ with $0\leq M\leq I$, determine whether $h_{{\rm
    SepSym}(d)}(M)$ is $\geq c/4$ or $\leq s/4$.
\item \label{it:EW}
The set EW:=EW$(d,d)$ of {\em entanglement witnesses}~\cite{terhal00} is the dual cone of
  Sep, meaning that
\be {\rm EW}(d,d) = \{W : \tr W\rho \geq 0 , \forall \rho\in\Sep(d,d)\}.
\label{eq:EW}\ee
Given $M$, determine whether $\min \{ \|M+W\|
: W\in {\rm EW}(d,d)\}$   is $\geq c$ or $\leq s$.
\item \label{it:1inf-norm}
For a quantum channel $\cN$, determine whether the superoperator $1\ra\infty$ norm
$\|\cN\|_{1\ra \infty}$ is $\geq c$ or $\leq s$, where $\|\cN\|_{1\ra
  \infty} := \max_{\rho}\frac{\|\cN(\rho)\|_\infty}{\|\rho\|_1}$.
\item \label{it:12-norm} For a quantum channel $\cN$, determine
  whether the superoperator $1\ra 2$ norm $\|\cN\|_{1\ra 2}$ is $\geq
  4c-3$ or $\leq 4s-3$, where $\|\cN\|_{1\ra 2} :=
  \max_{\rho}\frac{\|\cN(\rho)\|_2}{\|\rho\|_1}$.  (This equivalence
  is only nontrivial for some values of $c,s$.)
\item \label{it:Sinfmin}
Given $\cN$, determine whether the minimum output R\`enyi entropy $S_\infty^{\min}(\cN)$ is $\geq \log(1/s)$ or $\leq \log(1/c)$.
Here $S_\infty^{\min}:=\min_\rho S_\infty(\rho)$, where $S_\infty(\sigma):=-\log \|\sigma\|_\infty$.
\item \label{it:S2min}
Given $\cN$, determine whether the minimum output R\`enyi entropy
$S_2^{\min}(\cN)$ is $\geq \log(2/\sqrt s)$ or $\leq \log(2/\sqrt c)$.
Here $S_2^{\min}:=\min_\rho S_2(\rho)$, where $S_2(\sigma):=-\log \|\sigma\|_2$.
\item \label{it:inj}
Given a 3-index tensor $T\in \bbC^d\ot\bbC^d\ot\bbC^d$, determine whether the injective tensor norm $\|T\|_{\rm inj}$ is $\geq \sqrt c$ or $\leq \sqrt s$.  The injective tensor norm is defined here to be
\be \|T\|_{\rm inj} = \max_{x,y,z\in B(\bbC^d)} |\bra{T}\cdot\ket x
\ot \ket y \ot \ket z|\ee
and $T$ should have the property that for some choice of indices it
can be interpreted as a linear map from $\bbC^d \ra \bbC^{d^2}$ with
operator norm $\leq 1$.
\item \label{it:2toinf}
 Given a linear map $T$ from $\bbC^d$ to $L(\bbC^d)$ with operator norm $\leq 1$, determine whether $\|T\|_{\ell_2\ra S_\infty}$ is $\geq \sqrt c$ or $\leq \sqrt s$.  Here $\ell_2$ is the usual vector 2-norm and we use $S_\infty$ to emphasise that the output norm is the Schatten $\infty$-norm for operators.
\item \label{it:geom}
Given a pure state $\ket\psi\in B(\bbC^d\ot\bbC^d\ot\bbC^d)$, define the geometric measure of entanglement
\be E_{\rm geom}(\ket\psi) = -\log \max_{x,y,z\in B(\bbC^d)} |\bra{\psi}\cdot\ket x \ot \ket y \ot \ket z|^2 .\ee
Then determine whether $E_{\rm geom}(\ket\psi)$ is $\leq \log(d/c)$ or
$\geq \log(d/s)$, for $\ket{\psi}^{ABC}$ satisfying $\psi_A = I/d$.
\item \label{it:min-ent}
Given a subspace $V \subseteq \bbC^d \ot \bbC^d$, define the minimum entanglement of $V$ to be
\be \nu_\infty(V) := \min_{\ket\psi\in B(V)} \|\tr_1 \proj\psi\|_\infty,\ee
where $\tr_1$ means the partial trace over the first subsystem.  Then determine whether $\nu_\infty(V)$ is $\geq c$ or $\leq s)$. 
\item \label{it:mean-field}\label{it:last-equiv}
Given a Hermitian $K\in L(\bbC^d \ot \bbC^d)$ with $0\leq K\leq I$, define the mean-field Hamiltonian $H_n\in L((\bbC^d)^{\ot n})$ by
\be H_n := \frac{1}{n(n-1)} \sum_{1 \leq i\neq j \leq n} K^{(i,j)} ,\ee
where $K^{(i,j)}$ indicates the operator with $K$ acting on systems $i,j$ and identity matrices elsewhere.  Let $\lambda_{\min}(H_n)$ denote the smallest eigenvalue of $H_n$.  Then determine whether $\lim_{n\ra \infty}\lambda_{\min}(H_n)$ is $\geq 1-s/2$ or $\leq 1-c/2$.
\suspend{enumerate}

The following problems can be reduced to and from estimating $h_{\Sep}$, but unlike the above problems, the reductions no longer preserve the same completeness and soundness.

\noindent{\bf Approximately equivalent problems}\vspace{-4pt}
\resume{enumerate}
\item \label{it:weak-mem}
Separability testing: given a state $\rho$ and a promise that it
  is either separable or a constant distance away from separable in the trace norm,
  determine which is the case.  In other words, solve
  $\WMEM_\eps(\Sep(d,d))$ for some $\eps>0$.
\item \label{it:weak-EW}
Weak membership for entanglement witnesses (defined in \eq{EW}), with
distance defined in operator norm; i.e. $\WMEM_\eps^\infty({\rm EW})$.
\item \label{it:more-parties}
Injective tensor norm for $k$-partite states with $k\geq 4$, geometric
measure of entanglement for $k$-partite states with $k\geq 4$, mean
field for interactions that are $k$-local for $k\geq 3$ and 
$h_{\Sep(d,d,d,\ldots)}$ and weak membership in $\Sep(d,d,d,\ldots)$
for more systems. 
\item \label{it:2-to-4}
Estimating the $\ell_2\ra \ell_4$ norm of a matrix, defined as $\|A\|_{\ell_2\ra\ell_4} := \sup_{x\neq 0} \|Ax\|_{\ell_4} / \|x\|_{\ell_2}$, where $\|x\|_{\ell p} := (\sum_{i=1}^d |x_i|^p)^{1/p}$.
\suspend{enumerate}

The following problems are at least as easy as $h_{\Sep}$, meaning that they can be reduced to estimating $h_{\Sep}$.  We will discuss below the specific parameters of the reductions.

\noindent{\bf Easier problems}\vspace{-4pt}
\resume{enumerate}
\item \label{it:3-sat}
Deciding 3-SAT$_{\log^2}$, which is defined to be the class of 3-SAT instances with $\log^2(d)$ variables and $O(\log^2(d))$ clauses.
By \corref{3-sat}, this reduces to $\QMA_{\log}(2)_{1/2, 1}$.
\item \label{it:clique}
The {\em planted clique problem} is to distinguish a $G_{n,1/2}$ graph (i.e. an undirected graph with $n$ vertices in which each edge is present with i.i.d.\ probability $1/2$) from the union of a $G_{n,1/2}$ graph and a random clique of size $n^{\beta}$.  For certain values of $\beta$, as we discuss below, this problem is known to reduce to estimating injective tensor norms.
\suspend{enumerate}

The following problems are at least as hard as estimating $h_{\Sep}$, meaning that $h_{\Sep}$ can be reduced to them,  in the special case when $c=1$.

\noindent{\bf Harder problems (when $c=1$)}\vspace{-4pt}
\resume{enumerate}
\item \label{it:Smin-alpha}
Given a channel $\cN$, determine whether $S_\alpha^{\min}(\cN)=0$ or is $\geq \log(1/s)$.
The minimum output R\'enyi $\alpha$-entropy of $\cN$ is defined to be
$S_\alpha^{\min}(\cN) = \min_\rho S_\alpha(\cN(\rho))$, where
$S_\alpha(\sigma) = \frac{1}{1-\alpha}\log\tr\sigma^\alpha$.
\item \label{it:Smin-reg}
Determine whether the {\em regularised} minimum output R\`enyi entropy $S_\alpha^{R, \min}(\cN)$ is 0 or $\geq \log(1/s)$.  Here $S_\alpha^{R, \min}(\cN)=\lim_{n\ra\infty} \frac{1}{n}S_\alpha^{\min}(\cN^{\ot n})$.
\end{enumerate}

Before explaining the connections between these problems, we note that
\corref{3-sat-log} can 
be restated in terms of $h_{\Sep}$, and thus also in terms of any of
the equivalent or harder problems. 

\begin{cor}\label{eq:equiv-hard}
Tasks \ref{it:hsep}-\ref{it:last-equiv} and
\ref{it:Smin-alpha}-\ref{it:Smin-reg} cannot be completed in time
$\poly(d)$ for any constants 
$0<s<c<1$ unless $\sat\in\DTIME(\exp(\sqrt{n}\poly\log(n)))$.
\end{cor}

\noindent{\bf Explanations}

\begin{enumerate}
\item {\em Changing $c$ and $s$ for $h_{\Sep}$:}  This claim follows
  from \thmref{hSep-equiv}.  Similarly difficult is the problem of producing
  an estimate $X$ such that $|X-h_{\Sep}(M)|\leq \eps$ for some
  $\eps>0$.

One subtlety is that the $c=1$ case is not known to be equivalent to
the $c<1$ case.  Soundness, on the other hand, is always nonzero,
since we always have $h_{\Sep(d,d)}(M) \geq \tr M / d^2$.

\item {\em Estimating $h_{\rm ProdSym}$:}    We show this has
  equivalent difficulty to estimating $h_{\rm Sep}$.  Initially assume
 that we have an algorithm for estimating $h_{\rm ProdSym}$, and given $M\in L(\bbC^d \ot
  \bbC^d)$, would like to compute $h_{\rm Sep}(M)$.  Then define $M' =
  \ket{01}\bra{01} \ot M $. 
 $M'$ is $4d^2$-dimensional, and if $0\leq M\leq I$, then $0 \leq M'\leq I$.

To calculate $h_{\rm ProdSym}$, we can without loss of generality let
 \be \ket\psi = \sqrt{p_0}\ket
0\ket\alpha + \sqrt{p_1}\ket 1\ket\beta,\ee
 where $p_0+p_1=1$ and
$\ket\alpha,\ket\beta\in B(\bbC^d)$.  Then $\tr M'(\psi \ot \psi) =
p_0p_1 \tr M(\alpha \ot \beta)$.  This is maximised when
$p_0=p_1=1/2$.  Thus
$$h_{{\rm ProdSym}(2d)}(M') = \frac{1}{4} h_{\Sep(d,d)}(M).$$

Conversely, suppose we are given an arbitrary $M$ and the ability to
compute $h_{\Sep}(\cdot )$ and would like to
estimate $h_{\rm ProdSym}(M)$.  First we assume $\cF M \cF=M$.  This
can be done WLOG since $h_{\rm ProdSym}(M) = h_{\rm ProdSym}((M + \cF
M \cF)/2)$. Then 
define $M' = \frac{\Pi_{\rm sym}^{d,2}
  + M}{2}$.  Our desired equivalence will follow from the following claim:
\be h_{\Sep}(M') = \frac{1+h_{\rm ProdSym}(M)}{2}.
\label{eq:Sep-ProdSym}\ee
One direction is easy: if $h_{\rm ProdSym}(M) = \tr M(\psi \ot \psi)$
then $h_{\Sep}(M') \geq \tr M' (\psi \ot \psi) = (1+h_{\rm
  ProdSym}(M))/2$.
To upper-bound $h_{\Sep}(M') = \max_{\alpha,\beta}\tr M'(\alpha \ot
\beta)$, we define $\theta,\ket a, \ket b$ such that
\bas \ket \alpha & = \cos(\theta/2) \ket a + \sin(\theta/2)\ket b\\
 \ket \beta & = \cos(\theta/2) \ket a - \sin(\theta/2)\ket b.\eas
To compute $\tr M'(\alpha \ot \beta)$, first we see that $\tr \Pi_{\rm
  sym}^{d,2}(\alpha \ot \beta) = (1 + \tr \alpha\beta)/2 =
1-\sin^2(\theta)/2$.  Next, we expand
$$\ket\alpha\ket \beta = 
\cos^2(\theta/2)\ket{aa} +
\sin(\theta/2)\cos(\theta/2)(\ket{ba}-\ket{ab})
- \sin^2(\theta/2)\ket{bb}.$$
When we expand $\bra{\alpha,\beta} M \ket{\alpha,\beta}$, the symmetry
of $M$ means that terms such as $\bra{aa}M(\ket{ba}-\ket{ab})$ vanish,
and we are left with
\bas &\cos^4(\theta/2)\bra{aa}M\ket{aa}
 + \sin^4(\theta/2)\bra{bb}M\ket{bb}
- \sin^2(\theta/2)\cos^2(\theta/2) (\bra{aa}M\ket{bb} +
\bra{bb}M\ket{aa}) \\
&+ 2\sin^2(\theta/2)\cos^2(\theta/2) 
\frac{\bra{ba}-\bra{ab}}{\sqrt{2}} M
\frac{\ket{ba}-\ket{ab}}{\sqrt{2}}.
\eas
Since $\|M\|\leq 1$, and using the definition of $h_{\ProdSym}$, we
have
\beas \tr M'(\alpha\ot\beta) &\leq&
1 -\frac{\sin^2(\theta)}{2} +
\frac{(\sin^4(\theta/2)+\cos^4(\theta/2))h_{\ProdSym}(M)
 + \sin^2(\theta)}{2}\\
&\leq& \frac{1+h_{\ProdSym}(M)}{2}.
\eeas
Maximising over all unit vectors $\alpha,\beta$, this establishes
\eq{Sep-ProdSym}.  We remark that \lemref{qmak-sim} would also relate
$h_{\Sep}$ and $h_{\rm ProdSym}$ but not in this exact fashion.
\item {\em Estimating $h_{\rm SepSym}$:}  Given $M$, let $M' =
  M^{A_1B_1} \ot I^{A_2 B_2}$.  Then $h_{\rm ProdSym}(M') =
  h_{\SepSym}(M)$. 

For the converse, we use the same construction as $h_{\ProdSym}$.  Assume
 that we have an algorithm for estimating $h_{\rm SepSym}$, and given $M\in L(\bbC^d \ot
  \bbC^d)$, would like to compute $h_{\rm Sep}(M)$.  Again we define $M' =
  \ket{01}\bra{01} \ot M$.  Let $\rho$ achieve the maximum of $\tr
  M'(\rho \ot \rho)$, and expand $\rho = 
\ket 0 \bra 0 \ot \rho_{00} + 
\ket 0 \bra 1 \ot \rho_{01} + 
\ket 1 \bra 0 \ot \rho_{10} + 
\ket 1 \bra 1 \ot \rho_{11},$ for some $\rho_{ij}\in L(\bbC^d)$.  Then
$\tr M'(\rho \ot \rho) = \tr M(\rho_{00} \ot \rho_{11})$.  Since
$\rho_{00}, \rho_{11}$ are proportional to density matrices, and
$\tr\rho=\tr\rho_{00} + \tr\rho_{11}$, the rest of the analysis
proceeds identically to in the case of $h_{\ProdSym}$.

\item {\em Entanglement witnesses:} $h_{\Sep}(M)$ is a convex program
whose dual is given by the minimisation of $\|M+W\|$ over $W\in {\rm
  EW}$.  See \cite{gharibian11} for a discussion of this point. 

\item {\em Estimating $\|\cN\|_{1\ra \infty}$:} 
This connection has been known for some
time as folklore and has appeared before in Ref.~\cite{Mat05} (which
cites a personal communication from Watrous).
Since the largest value of $\|\cN(\rho)\|_\infty$ occurs when $\rho$
is pure, finding it corresponds to optimising a trilinear form over
unit vectors~\cite{werner02}; i.e. is equivalent to the injective tensor norm problem
described in task~\ref{it:inj}.  More concretely, define $V_\cN : \bbC^d
\ra \bbC^d \ot \bbC^d$ to be the isometric extension of $\cN$, so that
$\tr_E V_\cN \rho V_\cN^\dag = N(\rho)$.  Then 
\be \|\cN\|_{1\ra\infty} = \max_{\alpha,\beta,\gamma\in B(\bbC^d)}
\L|(\bra\beta\ot \bra\gamma)V_\cN\ket\alpha\R|^2.\ee
This expression equals $\|T\|_{\rm inj}^2$ (see task~\ref{it:inj}) for
$\ket T = \sum_{i=1}^d \ket i \ot V_\cN \ket i$.

\item {\em Estimating $\|\cN\|_{1\ra 2}$:} 
Define $M := (\cN^\dag \ot \cN^\dag)(\frac{I+\cF}{2})$. Then $h_{\ProdSym}(M) =
\max \{(1+\tr (\cN(\psi))^2)/2 : \ket\psi\in B(\bbC^d)\} = (1+\|\cN\|_{1\ra
  2}^2)/2$.   By \eq{Sep-ProdSym}, there exists $M'$ with
$h_{\Sep}(M')=(3+\|\cN\|_{1\ra 2}^2)/4$.

\item {\em Estimating $S_\infty^{\min}(\cN)$:}
Since $S_\infty^{\min}(\cN)= -\log \|\cN\|_{1\ra\infty}$, this is
equivalent to  task~\ref{it:1inf-norm}.
\item {\em Estimating $S_2^{\min}(\cN)$:}  Similarly, this is
  equivalent to task~\ref{it:12-norm}.

\item {\em Estimating $\|T\|_{\rm inj}$:}  This relates to $h_{\Sep}$
  in a way that is analogous to the relation between the largest
  singular value of a matrix $A$ and the largest eigenvalue of $A^\dag A$.
\ba \|T\|_{\rm inj}^2
& = \max_{x,y,z\in B(\bbC^d)} \left|\sum_{i,j,k=1}^d T_{i,j,k} x_i y_j
  z_k\right|^2
\nn \\
& = \max_{x,y\in B(\bbC^d)} 
\left\|\sum_{i,j,k=1}^d T_{i,j,k} x_i y_j \ket k \right\|_2^2 
\nn\\
 &= \max_{x,y\in B(\bbC^d)} 
\sum_{i,j,i',j',k=1}^d  T_{i,j,k} T_{i',j',k}^* x_i y_j
x_{i'}^*y_{j'}^* 
\nn\\
& = h_{\Sep}\left(
\sum_{i,j,i',j',k=1}^d  T_{i,j,k} T_{i',j',k}^* \ket i \bra{i'}
\ot \ket j\bra{j'}
\right).
\label{eq:TT-star}\ea
We can think of $T$ as a $d^2\times d$ matrix by grouping indices
$i,j$ together.  By doing so, \eq{TT-star} becomes simply
$h_{\Sep}(TT^\dag)$ (and by our assumption about the operator norm
of the matrix version of  $T$, we have that $TT^\dag \leq I$).  To show the equivalence holds in both
directions, observe that any $M\geq 0$ with rank $\leq d$ can be
written as $TT^\dag$ for some $d^2\times d$ matrix $T$.  This rank
restriction can be removed either by taking $T$ to be a $d\times
d\times d^2$ tensor, or by suitable padding.

\item {\em Estimating $\|T\|_{\ell_2\ra S_\infty}$:} 
Observe that
\[ \|T\|_{\ell_2\ra S_\infty}
 = \max_{x\in B(\bbC^d)} \|T(x)\|_{S_\infty}
 = \max_{x,y,z\in B(\bbC^d)} |\bra y T(x) \ket z|. \]
 This last expression is the maximum of a trilinear form over triples of unit
 vectors, and so is equivalent to computing an injective tensor norm
 (see task~\ref{it:inj}).

\item {\em Estimating $E_{\rm geom}(\ket \psi)$:} Treating $\ket\psi$ as a
  3-index tensor, it is apparent from the definitions that $E_{\rm
    geom}(\ket\psi) = -\log\|\ket\psi\|_{\rm inj}^2$.   The condition
  on $\psi_A$ corresponds to the requirement that $\sqrt{d}$ times the
  resulting tensor should be
  be an isometry when interpreted as a map from $A\ra BC$.   Thus the
  estimation problems are equivalent.  The $\sqrt d$ factor also
  explains why we need to distinguish the cases $E_{\rm geom}\leq
  \log(d/c)$ and $\geq \log(d/s)$.  Interestingly, \thmref{prodtest}
  shows that it is {\em easy} to distinguish whether the geometric
  measurement of entanglement is $\leq \eps$ or $\geq C + \eps$ for a
  sufficiently large constant $C$.

\item {\em Estimating $\nu_\infty(V)$:}  Suppose that $\dim V=m$.
  Define $T$ to be an isometry from $\bbC^m$ to $V$.  Then $\nu_\infty(V)
  = \|T\|_{\ell_2\ra S_\infty}$, and estimating $\nu_\infty(V)$ is
  equivalent to task~\ref{it:2toinf}.  For simplicity, one can assume
  that $m=d$ by padding the appropriate dimensions; this does not
  affect the complexity by more than a polynomial factor. 

\item {\em Mean-field Hamiltonians:} 
In Ref.~\cite{fannes06}, the quantum de Finetti theorem was used
to show that when $n\gg d^2$, then the ground state of $H$ is very
close to a product state.   In the limit, finding the ground-state
energy density of $H$ is equivalent to calculating
the quantity
\[ \max_{\rho \in \cB(d)} \tr K (\rho \ot \rho). \]
This task is therefore equivalent to task 3.

\item {\em Separability testing:} A classic result in convex
  optimisation~\cite{GLS} allows one to show that
  $\WMEM_\eps(\Sep(d,d))$ is roughly equivalent to estimating
  $h_{\Sep}$.  Unfortunately, known versions of this result give up
  $1/\poly(d)$ factors in the approximation guarantees.  
This fact has been used to show the $\NP$-hardness of
$\text{WMEM}_{1/\poly}(\Sep)$ in Refs.~\cite{liu07,gharibian10,beigi10}
and, previously, of $\text{WMEM}_{1/\exp}(\Sep)$ by
Gurvits~\cite{gurvits03} (although the connection to $\QMA_{\log}(2)$
was only observed by \cite{beigi10}).  We conjecture that
$\WMEM_{\eps}(\Sep(d,d))$ should be $\NP_{\log^2}$-hard for some $\eps>0$; i.e. that
$\Sep(d,d)$ cannot be approximated to (sufficiently small) constant
accuracy in time $\poly(d)$.

However, we are able to rule out only algorithms that
have the further restriction of recognizing a nearly convex set that
in turn approximates $\Sep$ to constant accuracy.   The following
result is an immediate consequence of Corollary 4.3.12 of \cite{GLS}
and \corref{3-sat-log}.
\begin{prop}\label{prop:sep-test}
Suppose that there exists a constant $\eps>0$ such that for all $d$,
there exists a convex set $K_d$ with Hausdorff distance $\eps$ to
$\Sep(d,d)$ such that $\WMEM_{1/\poly(d)}(K)$ can
be solved in time $\poly(d)$.
Then $\sat\in\DTIME(\exp(\sqrt{n}\poly\log(n)))$.
\end{prop}
 As a result, one possible alternate title for our paper could have been:

\begin{center}
\fbox{{\em Detecting pure entanglement is easy, so detecting mixed entanglement is hard.}}
\end{center}

In fact, reductions between WMEM and approximating $h_{\Sep}$ go in both directions.  We
have to be careful not to assume (as does \cite{ioannou07}) that
approximation algorithms for $h_{\Sep}$
output an approximately optimal density matrix.  Indeed, some
approximations (e.g.\ \cite{brandao10a}) only output a scalar value
approximating $h_{\Sep}$.  However, we can prove the following reduction.

\begin{prop}\label{eq:WMEM-from-WOPT}
Let $f(M)$ be a convex function such that $f(0)=0$ and $|f(M)-h_{\Sep(d,d)}(M)|\leq
\eps \|M\|_\infty$.  Given oracle access to $f$, we can solve
$\WMEM_{2\eps}(\Sep(d,d))$ in time $\poly(d)$.
\end{prop}

\begin{proof}
Suppose we are given a density matrix $\rho$ for which we would like
to solve $\WMEM_\eps(\Sep(d,d))$. The algorithm computes  
$$Z:=\max\{ \tr M \rho - f(M) : -I \leq M \leq I\}.$$  This can be done in polynomial
time~\cite{GLS}.    If $Z\leq \eps$, then we declare that $\rho\in
\Sep(d,d)$, and if $Z>\eps$ then we declare that $\rho\notin
B_1(\Sep(d,d),\eps)$.

To analyze the correctness of the algorithm, we prove rather that it
is {\em not wrong}.  In other words, we need to give the correct
answer in the cases: (1) when $\rho\not\in B_1(\Sep(d,d),2\eps)$, and 
(2) when $\rho\in \Sep(d,d)$.  In case (1), then $\rho$ has
trace distance $>2\eps$ from every point in $\Sep(d,d)$ and so there
exists a $M$ with $\|M\|_\infty\leq 1$ for which $\tr M\rho >
h_{\Sep}(M) + 2\eps$.  This implies that $\tr M\rho > f(M)+\eps$, and
that $Z>\eps$.

On the other hand, in case (2), we have $\rho\in \Sep(d,d)$, which
implies $\tr M\rho \leq h_{Sep(d,d)}(M) \leq f(M)+\eps$ for all $M$,
and thus $Z\leq \eps$.
\end{proof}

\item {\em Weak membership for entanglement witnesses:}  For a
  Hermitian matrix $M$, we have $h_{\Sep}(M)\leq \eps$ if and only if
  $M \in B_\infty({\rm EW},\eps)$.   Here $B_\infty(S,\eps)$ refers to the points
  within $\eps$ of a set $S$ in the Schatten-$\infty$ norm.
 This shows
  that if we can approximate $h_{\Sep}$ then we can solve the weak
  membership problem for EW.  Conversely, if we are given an algorithm
 for $\WMEM_\eps^\infty({\rm EW})$, then on input $M$ we can use binary
 search to find approximately the smallest $\gamma$ such that
 $M-\gamma I\in {\rm EW}$.  This $\gamma$ will be within $\epsilon$ of
 $h_{\Sep}(M)$.

\item {\em $k$-partite tensor norm problems:}  By adding more systems,
  we will not make any of the problems any easier.   To reduce from an
  injective tensor norm on $k$-tensors to the injective tensor norm on
  3-tensors, we can use \lemref{qmak-sim}.   The other reductions claimed
  in this point are similar.

When performing these mappings, there is no direct penalty that
depends on $k$.   However, the dimensions of the spaces involved will
scale exponentially with $k$.  For example, estimating the support
function of $\Sep(d_1,d_2,\ldots,d_k)$ is harder than estimating
$h_{\Sep(d_1,d_2)}$, and by \lemref{qmak-sim} can be reduced to estimating
$h_{\Sep(d,d)}$, where $d:=\poly(d_1d_2\cdots d_k)$.

\item {\em $\ell_2 \rightarrow \ell_4$ norm:} If $A= \sum_{i=1}^m \ket i \bra{\alpha_i}$ with each $\ket{\alpha_i}\in\bbC^n$, then
\be \|A\|_{2\ra 4}^4 = \max_{\ket\psi\in S(\bbC^n)} |\braket{\alpha_i}{\psi}|^2
 = h_{\ProdSym}(\sum_i \alpha_i^{\ot 2})
  \ee
To show a reduction in the other direction, we need to convert any measurement $M$ into an $M'$ with similar $h_{\Sep}$ such that $M'$ can be written as $\sum_i \alpha_i \ot \alpha_i$.
 \thmref{hSep-equiv} instead allows us to write $M$ in the form $\sum_i \alpha_i \ot \beta_i$, but the desired $\sum_i \alpha_i \ot \alpha_i$ form can be achieved by taking
  $M' =  (\sqrt{M}^{A_1B_1} \ot \sqrt{M}^{A_2B_2}) (P_{\rm sym}^{A_1B_1} \ot P_{\rm sym}^{A_2B_2})(\sqrt{M}^{A_1B_1} \ot \sqrt{M}^{A_2B_2})$. This construction is analyzed in \cite{BBHKSZ12}.
  
  As a result, it is $\NP$-hard to approximate the $2\ra 4$ norm to $1/\poly(d)$ accuracy, and it is $\NP_{\log^2}$-hard to approximate it to within a constant multiplicative error.

\item {\em $\sat$ on $\log^2$ variables:} \corref{3-sat} explains how
  $\sat$ instances with $O(\log^2(n))$ variables can be reduced to
  determining whether $h_{\Sep(n,n)}(M)=1$ or is $\leq 1/2$, for some
  efficiently-computable matrix $M$.  It is an extremely interesting
  open question to determine whether the reverse reduction is also
  possible. 

\item {\em Planted clique problem:}  In \cite{BV09}, the problem of
  finding a clique of size $\Omega(n^{1/r}r^5\log^3(n)\alpha^2)$
  planted in a $G_{n,1/2}$ graph 
was reduced to determining whether $\|T\|_{\rm inj}$ is $\geq \alpha^r$ or
$\leq 1$.  Here $\alpha\leq 1$ and $T$ is a tensor in
$(\bbC^n)^{\ot r}$ with all $\pm 1$ entries.  Thus, we can trivially
bound $\|TT^\dag\|_\infty \leq n^r$ (although we suspect that the norm
should typically be $O(n^{r-1})$).

For concreteness, let us focus on the case of $r=3$.  In this case,
\cite{BV09} reduce the problem of finding a clique of size
$n^{1/3+o(1)}$ to $\QMA_{\log}(2)_{1/n^{1.5},2/n^{1.5}}$ (or
$\QMA_{\log}(2)_{1/n,2/n}$, if one assumes that $\bbE \|TT^\dag\|_\infty
\leq O(n^{r-1})$).   Since random graphs
typically have no clique of size larger than $2\log n+1$, the planted
clique problem can always be reduced to a Circuit-SAT instance of size
$\poly(n)$ with $O(\log^2(n))$ input variables.
Since 
$$\QMA_{\log}(2)_{1/n^{1.5},2/n^{3/2}} \supseteq \QMA_{\log}(2)_{1/2,1}\ni 
\sat_{\log^2},$$
this implies that the reduction of \cite{BV09} achieves a reduction
that is comparable to the previously known reduction to Circuit-SAT.
It is an interesting open question to determine whether Circuit-SAT
instances with size $\poly(n)$ and $O(\log^2(n))$ input variables can
be placed in $\QMA_{\log}(2)_{1/2,1}$.  If this were possible, then it
would imply that the reduction of \cite{BV09} would be strictly
subsumed by the previous reduction of planted clique to Circuit-SAT.

\item {\em Minimum output R\`enyi entropies:} For any $\alpha\geq 0$, we have $S_\alpha^{\min}(\cN) \geq S_\infty^{\min}(\cN)$ but also $S_\alpha^{\min}(\cN)=0$ iff $S_\infty^{\min}(\cN)=0$. Thus, for any $c>0$, distinguishing between $S_\alpha^{\min}=0$ and $S_\alpha^{\min} \ge c$ is at least as hard as distinguishing between $S_\infty^{\min}=0$ and $S_\infty^{\min}\ge c$.

\item {\em Regularised minimum output R\`enyi entropies:} Our hardness result
for $S_\alpha^{\min}$ immediately gives us the equivalent hardness result for $S_\alpha^{R, \min}$.
The reason is that our proof of amplification for $\QMA(2)$ protocols (see \lemref{qma-sep-ampl}) essentially works by
constructing a channel $\cN$ for which $S_\infty^{R, \min}(\cN) = S_\infty^{\min}(\cN)$ by design.

\end{enumerate}

\subsubsection{Additional remarks}

\begin{itemize} 
\item {\em Additivity violations:} As a result of the connection between $\QMA(2)$ and
estimating $S_\infty^{\min}$, the question of whether $\QMA(2)$
protocols can be amplified to exponentially small error is directly
related to the question of additivity of the minimum output
min-entropy (equivalently, multiplicativity of the maximum output
infinity norm).  Indeed, additivity violations for $S_\infty^{\min}$
(e.g.\ \cite{HW02,hayden08,GHP09}) translate directly into $\QMA(2)$
protocols for which perfect parallel repetition fails\footnote{Note
  that, taking the standard definition of $\QMA(2)$, this is strictly
  speaking only true if the corresponding $\QMA(2)$ protocol can be
  implemented in polynomial time.}.  Conversely, \cite{Mat05} observed
that $\QMA(2)$ protocols obey perfect parallel repetition when the
corresponding channel $\cN$ is known to have additive
$S_\infty^{\min}$, for example when $\cN$ is entanglement breaking.
Indeed our \lemref{qma-sep-ampl} is a restatement of this point.

\item {\em Minimum output entropy:}
Beigi and Shor previously showed that it is $\NP$-hard to compute the
minimum output entropy up to $1/\poly(d)$ accuracy \cite{beigi07}. Our
result improves their accuracy requirement, but under a more
restrictive complexity assumption.  
For general channels, we automatically have $S_\alpha^{R, \min}(\cN) \leq
S_\alpha^{\min}(\cN)$; however, the famous failures of the additivity
conjecture imply that sometimes this inequality can be strict, with
examples known for $\alpha\geq 1$ \cite{Hastings,hayden08} and for
$\alpha$ near 0 \cite{cubitt08a}.  Still, these examples only
demonstrate that $S_\alpha^{R, \min}$ can deviate very slightly from
$S_\alpha^{\min}$.  On the other hand, various lower bounds for
$S_\alpha^{R, \min}$ are known \cite{vidal01,yang05,devetak06,YHW08},
and it may be that one of these bounds could be related to
$S_\alpha^{\min}$, thereby proving that $S_\alpha^{R, \min}$ cannot be
far from $S_\alpha^{\min}$.  Our results do not rule out the
possibility that $S_\alpha^{\min}$ may be fruitfully related to
$S_\alpha^{R, \min}$.  However, they do imply that these lower bounds
on $S_\alpha^{R, \min}$ (and thereby on $S_\alpha^{\min}$) are
unlikely to be efficiently computable, or if they are, they are likely
to be extremely loose bounds in general. 

\item {\em Mean-field approximation:} Previous work on the hardness of
  approximating ground-state energy of quantum systems generally had
  $d$ constant and only ruled out the possibility of $1/\poly(n)$
  approximation error.  By addressing the case when approximation
  error is a constant fraction of the overall energy of the system,
  our result achieves one of the goals of the conjectured quantum PCP
  theorem \cite{AALV09}.  However, we require $d$ to grow
  asymptotically, and we achieve a hardness result much weaker than
  QMA-hardness.  Indeed, due to the {\em classical} PCP theorem
  combined with the Exponential Time Hypothesis, finding the ground
  state of a system of $d^2\log(d)$ bits (without any symmetry
  constraint) is likely to require time $\exp(d^2\log(d))$, while our
  results merely imply an $\exp(\Omega(\log^2(d)))$ lower bound.
  Still, our result provides a superpolynomial bound on an important
  class of Hamiltonians that had been previously considered to be
  computationally easy to work with.

\end{itemize}


\section{Optimality of the product test}
\label{sec:optimal}

Our product test has perfect completeness in the sense that if $\ket{\psi}$ is
exactly a product state then it will always pass the product
test.  Soundness could be in principle described by a functional
relation between maximum acceptance probability and distance to the
nearest product state.  However, for our purposes, we can say that our test
has constant soundness in that if $\ket{\psi}$ has overlap at most $1-\eps$
with any product state then it will pass the product
test with probability at most $1-\Theta(\eps)$.

In fact, if we consider only product-state tests with perfect
completeness, then we can show that our test has optimal soundness:
that is, it rejects as often as possible given the constraint of
always accepting product states.  More generally, suppose that a product-state
test $T$ is given $\ket{\psi}^{\ot k}$ as input.  Since the outcome of the
test is binary, we can say that $T$ is an operator on the $nk$-qudit
Hilbert space with $0\leq T \leq I$ and that the test accepts with
probability $\tr T \psi^{\ot k}$.  

Let $S$ be the set of product states in $\bbC^{d_1}\ot \cdots \ot
\bbC^{d_n}$, and define $S^k$ to be the span of $\{\ket{\phi}^{\ot k}:\ket\phi \in S\}$.
For a single system $\bbC^d$, the span of $\{\ket{\phi}^{\ot k} :
\ket\phi\in\bbC^d\}$ is denoted $\Sym^k \bbC^d$.  This is the symmetric
subspace of $(\bbC^d)^{\ot k}$, meaning that it can be equivalently
defined to be the set of vectors in $(\bbC^d)^{\ot k}$ that is
invariant under permutation by the symmetric group $S_k$.  This fact
allows the projector onto $\Sym^k\bbC^d$, which we denote
$\Pi^{\text{sym}}_{d,k}$, to be implemented efficiently~\cite{barenco97}.
Also, it implies that $S^k = \Sym^k \bbC^{d_1} \ot \cdots \ot
\Sym^k \bbC^{d_n}$ and that the projector onto $S^k$, denoted $\Pi_{S^k}$, is
$\Pi^{\text{sym}}_{d_1,k} \ot \cdots \ot \Pi^{\text{sym}}_{d_n,k}$.

Now we return to our discussion of product-state tests.  If $\tr T
\phi^{\ot k}=1$ for all $\phi\in S$, then $T \geq \Pi_{S^k}$.  Thus,
$T$ will always accept at least as often as $\Pi_{S^k}$ will on any
input, or equivalently, taking $T=\Pi_{S^k}$ yields the test which
rejects as often as possible given the constraint of accepting every
state in $S^k$.

To understand $\Pi_{S^k}$, note that the projector onto $\Sym^k\bbC^d$
is given by 
$\frac{1}{k!}\sum_{\pi\in\cS_k}P_d(\pi)$, where
\be P_d(\pi)=\sum_{i_1,\ldots,i_k\in [d]} \ket{i_1,\ldots,i_k}
\bra{i_{\pi(1)},\ldots,i_{\pi(k)}}.
\label{eq:Pmatrix}\ee
For $k=1$, $\Sym^1 \bbC^d$ simply equals
$\bbC^d$, and $\Pi_{S^1}$ is the identity operator on $(\bbC^d)^{\ot
  n}$.  Thus, no non-trivial product-state test is possible when given one
copy of $\ket\psi$.

When $k=2$, $\Sym^2 \bbC^d$ is the $+1$ eigenspace of $(I+\cF)/2$,
which is the space that passes the swap test.  Thus, the product
test (in \protoref{prodtest}) performs the projection onto
$S^2$ and therefore rejects non-product states as often as possible
for a test on $\ket{\psi}^{\ot 2}$ that always accepts when $\ket\psi$
is a product state.
These arguments also imply that given  $\ket\psi^{\ot k}$,
projecting onto $S^k$ yields an optimal $k$-copy product-state test of
$\ket\psi$.  The strength of these tests is strictly increasing with
$k$, but we leave the problem of analysing them carefully to future
work.

Finally, this interpretation of the product test allows us to
consider generalisations to testing membership in other sets $S$.  The
general prescription for a test that is given $k$ copies of a state is
simply to project onto the span of $\{\ket\psi^{\ot k}: \ket\psi\in S\}$.
We will not explore these possibilities further in this
paper, but see \cite{wang11} for a subsequent paper that considers
variations on this theme for the related problem of
testing properties of unitary operators.

\section{Stability of the depolarising channel}
\label{sec:depolar}

As discussed in \secref{strategy}, the correctness proof of the
product test in fact applies to a larger class of processes acting on
two copies of $n$-partite states.  In general, choosing $S$ according
to a binomial distribution on $[n]$ and taking the expectation of
$\tr\rho_S^2$ is equivalent to evaluating the output purity of $n$
uses of the depolarising channel $\cD_\delta$ (as defined in
(\ref{eqn:depolarising})).  A special case of this corresponds to the
probability of the product test passing.

\newcounter{depolarise}\setcounter{depolarise}{\value{thm}}
\begin{lem}
\label{lem:depolarise}
We have
\[ \tr (\mathcal{D}_{\delta}^{\ot n}\, \rho)^2 = \left(\frac{1-\delta^2}{d}\right)^n \sum_{S \subseteq [n]} \left(\frac{d \delta^2}{1-\delta^2}\right)^{|S|} \tr(\rho_S^2), \]
and in particular
\[ \tr (\mathcal{D}_{1/\sqrt{d+1}}^{\ot n}\, \rho)^2 = \frac{1}{(d+1)^n}\sum_{S \subseteq [n]} \tr(\rho_S^2), \]
and for pure product states,
\[
\Pprod(\delta) := \tr (\mathcal{D}_{\delta}^{\ot n}\, (\proj{\psi_1} \ot \cdots \ot \proj{\psi_n}))^2
= \left(\frac{d-1}{d} \delta^2 + \frac{1}{d} \right)^n.
\]
\end{lem}


We will see that it is possible to prove a more general version of
\thmref{prodtest}. 

\newcounter{stability}\setcounter{stability}{\value{thm}}
\begin{thm}
\label{thm:stability}
Given $\ket{\psi}\in (\bbC^d)^{\ot n}$, let
\[ 1 - \eps = \max\{ |\braket{\psi}{\phi_1,\ldots,\phi_n}|^2 : \ket{\phi_1},\ldots,\ket{\phi_n}\in\bbC^d\}.\]
Then (recalling the definitions of $\cD_\delta$ and $\Pprod(\delta)$ from
equations (\ref{eqn:depolarising}) and (\ref{eq:OPP})),
\[\tr (\mathcal{D}_{\delta}^{\ot n}\, \proj{\psi})^2 \le \Pprod(\delta)\Bigg(1 - 4\eps(1-\eps)\frac{d \delta^2(1-\delta^2)}{(1\!+\!(d-1)\delta^2)^2}+ 4 \eps^{3/2} \left(\frac{(1-\delta^2)^2+ d^2 \delta^4}{(1\!+\!(d-1)\delta^2)^2}\right)^2 \Bigg).
\]
In particular,
\[ \tr (\mathcal{D}_{1/\sqrt{d+1}}^{\ot n} \proj{\psi})^2 \le \Pprod(1/\sqrt{d+1})\left(1\!-\!\eps\!+\!\eps^2\!+\!\eps^{3/2} \right).\]
\end{thm}

The idea of the proof is more or less the same as the outline sketched
in \secref{strategy} and the details can be found in
Appendix~\ref{sec:depolarising}.

\section{Testing for product unitaries}
\label{sec:unitaries}

As well as being useful for testing quantum states, the product test has applications to testing properties of unitary operators.   
The results we obtain will be in terms of the normalised Hilbert-Schmidt inner product, which is defined as $\hsip{M}{N} := \frac{1}{d} \tr M^\dag N$ for $M,N \in M(d)$, where $M(d)$ denotes the set of $d \times d$
matrices. Note that, with this normalisation, $|\hsip{U}{V}| \le 1$
for unitary operators $U$, $V$. 

We consider the problem of testing whether a unitary operator is a tensor product. That is, we are given access to a unitary $U$ on the space of $n$ qudits (for simplicity, restricting to the case where each of the qudits has the same dimension $d$), and we would like to decide whether $U = U_1 \otimes \cdots \otimes U_n$. This is one possible generalisation of the classical problem of testing linearity of functions $f:\{0,1\}^n \rightarrow \{0,1\}$ \cite{blum93}. To see this, observe that $f$ is linear (i.e.\ $f(x \oplus y) = f(x) \oplus f(y)$ for all $x$ and $y$) if and only if the function $g:\{0,1\}^n \rightarrow \{\pm 1\}$ defined by $g(x) = (-1)^{f(x)}$ is a product of individual functions $g_i(x) = (-1)^{a_i x_i}$, for $a_i \in \{0,1\}$. Thus the diagonal unitary operator $U$ on $n$ qubits defined by $U_{xx} = g(x)$ is a tensor product if and only if $f$ is linear.

In \protoref{produnitarytest} we give a test that solves this
problem using the product test. The test always accepts product
unitaries, and rejects unitaries that are far from product, as
measured by the normalised Hilbert-Schmidt inner product. 
Several papers have proposed property tests with similar performance for other sets of unitary matrices: e.g.\ Pauli matrices~\cite{qboolean}, Clifford gates~\cite{cliff-test,wang11} and many other sets~\cite{wang11}.

The following correspondence (also
known as the Choi-Jamio\l{}kowski isomorphism) underlies our ability
to apply the product test to unitaries. 
Let $\ket{\Phi}$ be a maximally entangled state of two $d$-dimensional
qudits, written as $\frac{1}{\sqrt{d}}\sum_{i=1}^d \ket{i,i}$ in terms
of some basis $\mathcal{B} = (\ket{1},\dots,\ket{d})$. For any matrix
$M \in M(d^n)$, define $\ket{v(M)} := (M \otimes I)
\ket{\Phi}^{\otimes n}$. Then $\bra{j}\braket{k}{v(M)} =
\frac{\braket{j}{M|k}}{\sqrt{d^n}}$. In particular, for any matrices
$M,N \in M(d^n)$, $\hsip{M}{N} = \braket{v(M)}{v(N)} = \tr M^\dag N / d^n$.

%

\boxproto{Product unitary test}{
\label{proto:produnitarytest}
The product unitary test proceeds as follows.
\begin{enumerate}
\item Prepare two copies of the state $\ket{\Phi}^{\otimes n}$, then in both cases apply $U$ to the $n$ first halves of each pair of qudits to create two copies of the state $\ket{v(U)} \in (\bbC^{d^2})^{\ot n}$.
\item Return the result of applying the product test to the two copies of $\ket{v(U)}$, with respect to the partition into $n$ $d^2$-dimensional subsystems.
\end{enumerate}
}
Let the probability that this test passes when applied to some unitary
$U$ be $\Ptest(U)$. Then we have the following theorem, which proves a conjecture from \cite{qboolean}.

\def\thmunitarytest{
Given $U\in U(d^n)$, let
\[ 1 - \eps = \max\{ \left|\hsip{U}{V_1 \otimes \dots \otimes V_n}\right|^2 : V_1, \dots, V_n \in U(d)\}.\]
Then, if $\eps = 0$, $\Ptest(U)=1$. If $\eps \lesssim 0.106$, then  $\Ptest(U) \le 1 - \frac{1}{4} \eps + \frac{1}{16}\eps^2 + \frac{1}{8}\eps^{3/2}$. If $0.106 \lesssim \eps \le 1$, $\Ptest(U) \le 501/512$. More concisely, $\Ptest(U) = 1 - \Theta(\epsilon)$.
}
\begin{thm}\label{thm:unitary-test}
\thmunitarytest
\end{thm}

The proof is given in Appendix \ref{sec:unitary-proof}. It is not quite immediate from the previous results; the key problem is that the closest product state to $\ket{v(U)}$ may not correspond to the closest unitary operator to $U$.

Our test is sensitive to the Hilbert-Schmidt distance of a unitary
from the set of product unitaries. One might hope to design a similar test
that instead uses a notion of distance based on the operator norm.
However, this is not possible.
For example, if we could detect a constant difference in the operator norm
between an arbitrary unitary $U$ and the set of product unitaries then
we could find a single marked item in a set of size $d^n$.  By the
optimality of Grover's algorithm, this requires $\Omega(d^{n/2})$
queries to $U$.   More generally, any test that uses only a constant number
of black-box queries to $U$ can only detect an $\Omega(1)$ difference in an
$\Omega(1)$ fraction of the $d^n$ dimensions that $U$ acts upon.

\section{Open problems}
\label{sec:conc}

We conclude with a discussion of open problems related to our work.
\begin{enumerate}
\item Our main result can be seen as a ``stability'' theorem for the output
purity of the depolarising channel (cf. \secref{depolar})..  It is an
interesting problem to 
determine whether a similar result holds for all output R\'enyi
entropies for the depolarising channel, or even for all channels where
additivity holds. 
\item Can Theorem \ref{thm:prodtest} be tightened further, perhaps by improving the
constant in the $\epsilon^{3/2}$ term?  It would also be interesting
to improve the constants in Theorem \ref{thm:prodtest} in the regime
of large $\epsilon$, as at present they are extremely pessimistic.
The regime of large $\eps$ is generally somewhat mysterious: for
example, we do not know the minimum value of $\Ptest$, or the largest
distance from any product state that can be achieved by a state of $n$
qudits. This is equivalent to determining the maximal value of the geometric
measure of entanglement~\cite{wei03} which can be achieved by a pure state of $n$ qudits;
see the PhD thesis~\cite{aulbach11} for a review of recent work concerning bounds on this quantity.
\item Suppose the goal is to test whether a given state is of the form $\ket{\varphi}^{\ot n}$ for some unknown $\ket\varphi$.  Can we substantially improve on the performance of the product test, say with a test whose acceptance probability decreases exponentially in the number of positions not equal to $\ket\varphi$?  Ideally we would achieve performance comparable to the exponential de Finetti theorem~\cite{renner05}, but without any dependence on dimension.  The natural test for this problem is to project onto the symmetric subspace of all $2n$ positions.
\item The relationship between $\QMA$ and $\QMA(2)$ remains unresolved.  Our \thmref{qma2k} proves that $\QMA^{\SEP}(2)=\QMA(2)$, while the result of \cite{brandao11} implies that $\QMA^{{\rm LOCC}}(2)=\QMA$.  Can this gap be closed?  One possible way to do this would be to improve our results to show that $\QMA^{{\rm LOCC}}(2)=\QMA(2)$; but see also Appendix~\ref{sec:locc} for a proof that an efficient LOCC product test does not exist.  Alternatively, one might improve the simulation of \cite{brandao11} to apply to separable measurements instead of only LOCC measurements, but the obvious approaches to modifying their proof do not appear to work.  Finally, if $\QMA(2)$ is not shown to be in $\QMA$, one might hope for any upper bound on its complexity that is better than $\NEXP$.
\item Is there an oracle separation between $\QMA$ and $\QMA(2)$?   The equalities in the previous point relativise, so this is equivalent to showing a separation between $\QMA^{\SEP}(2)$ and $\QMA^{\rm LOCC}(2)$.
\end{enumerate}


\appendix

\section{The depolarising channel}
\label{sec:depolarising}

Let $\mathcal{D}_\delta$ be the qudit depolarising channel as defined
in equation (\ref{eqn:depolarising}). We will be interested in applying the
$n$-fold product $\mathcal{D}_\delta^{\ot n}$ to states of $n$ qudits,
and in particular in the purity of the resulting states. This has the
following characterisation. 

\newcounter{skip}\setcounter{skip}{\value{thm}}
\setcounter{thm}{\value{depolarise}}

\begin{lem}
We have
\[ \tr (\mathcal{D}_{\delta}^{\ot n}\, \rho)^2 = \left(\frac{1-\delta^2}{d}\right)^n \sum_{S \subseteq [n]} \left(\frac{d \delta^2}{1-\delta^2}\right)^{|S|} \tr(\rho_S^2), \]
and in particular
\[ \tr (\mathcal{D}_{1/\sqrt{d+1}}^{\ot n}\, \rho)^2 = \frac{1}{(d+1)^n}\sum_{S \subseteq [n]} \tr(\rho_S^2), \]
and for pure product states,
\[ \Pprod(\delta) := \tr (\mathcal{D}_{\delta}^{\ot n}\, (\proj{\psi_1} \ot \cdots \ot \proj{\psi_n}))^2 = \left(\frac{d-1}{d} \delta^2 + \frac{1}{d} \right)^n.\]
\end{lem}

\begin{proof}
Consider some Hermitian operator basis for $\mathcal{B}(\bbC^d)$ which contains the identity and is orthonormal with respect to the normalised Hilbert-Schmidt inner product $\langle A,B \rangle = \frac{1}{d}\tr A^\dag B$, and extend this basis to $\mathcal{B}((\bbC^d)^{\otimes n})$ by tensoring. Expand $\rho$ in terms of the resulting basis as
\[ \rho = \sum_{\mathbf{t} \in \{0,\dots,d^2-1\}^n} \hat{\rho}_{\mathbf{t}} \chi_{\mathbf{t}}.\]
where $\hat{\rho}_{\mathbf{t}} \in \bbR$, $\chi_{\mathbf{t}}$ represents an element of the tensor product basis corresponding to the string $\mathbf{t} \in \{0,\dots,d^2-1\}^n$, and the identity is indexed by 0 at each position. Then we have
\[ \tr(\rho_S^2) = d^{2n-|S|}\left(\sum_{\mathbf{t},\,\mathbf{t}_i=0,\,\forall i \in \bar{S}} \hat{\rho}_{\mathbf{t}}^2\right), \]
and hence, for any $\delta$,
\beas
\sum_{S \subseteq [n]} \delta^{|S|} \tr(\rho_S^2) &=& d^{2n} \sum_{S \subseteq [n]} (\delta/d)^{|S|} \left( \sum_{\mathbf{t},\,\mathbf{t}_i=0,\,\forall i \in \bar{S}} \hat{\rho}_{\mathbf{t}}^2 \right) = d^{2n} \sum_{\mathbf{t}} \hat{\rho}_{\mathbf{t}}^2 \left( \sum_{\substack{S \subseteq [n],\\\mathbf{t}_i=0,\,\forall i \in \bar{S}}} (\delta/d)^{|S|} \right)\\
&=& d^{2n} \sum_{\mathbf{t}} \hat{\rho}_{\mathbf{t}}^2 \left( \sum_{x=0}^{n-|\mathbf{t}|} \binom{n-|\mathbf{t}|}{x}\,(\delta/d)^{x+|\mathbf{t}|} \right)\\
&=& d^{2n} \sum_{\mathbf{t}} \hat{\rho}_{\mathbf{t}}^2\,(\delta/d)^{|\mathbf{t}|}(1+\delta/d)^{n-|\mathbf{t}|} \\
&=& (d(d+\delta))^n \sum_{\mathbf{t}} \hat{\rho}_{\mathbf{t}}^2\,(\delta/(\delta+d))^{|\mathbf{t}|} \\
&=& (d+\delta)^n \tr (\mathcal{D}_{\sqrt{\delta/(\delta+d)}}^{\ot n} \rho)^2.
\eeas
Rearranging completes the proof; the two special cases in the statement of the lemma can be verified directly.
\end{proof}

Using the above lemma, we can see that maximal output purity is
obtained only for product states, since only product states saturate
the inequality $\tr \rho_S^2 \leq 1$ for all $S\subseteq [n]$. We will
now prove our main result, which is a ``stability'' theorem for the
depolarising channel: if a state achieves close to maximal output
purity, it must be close to a product state. 

\setcounter{thm}{\value{stability}}

\begin{thm}
Given $\ket{\psi}\in (\bbC^d)^{\ot n}$, let
\be 1 - \eps = \max\{ |\braket{\psi}{\phi_1,\ldots,\phi_n}|^2 : \ket{\phi_1},\ldots,\ket{\phi_n}\in\bbC^d\}.\label{eq:closest-prod2}\ee
Then 
\[ \tr (\mathcal{D}_{\delta}^{\ot n}\, \proj{\psi})^2 \le \Pprod(\delta)\left(1 - 4\eps(1-\eps)\frac{d \delta^2(1-\delta^2)}{(1+(d-1)\delta^2)^2} + 4 \eps^{3/2} \left(\frac{(1-\delta^2)^2+ d^2 \delta^4}{(1+(d-1)\delta^2)^2}\right)^2 \right).\]
In particular,
\[ \tr (\mathcal{D}_{1/\sqrt{d+1}}^{\ot n}\, \proj{\psi})^2 \le \Pprod(1/\sqrt{d+1})\left(1 - \eps + \eps^2 + \eps^{3/2} \right).\]
\end{thm}

\begin{proof}
Without loss of generality assume that one of the states achieving the maximum in \eq{closest-prod2} is $\ket{0}^{\ot n}$, which we will abbreviate simply as $\ket{0^n}$, or $\ket 0$ when there is no ambiguity. We thus have
\[ \ket{\psi} = \sqrt{1-\eps} \ket{0} + \sqrt{\eps} \ket{\phi} \]
for some state $\ket{\phi}$ such that $\braket{0}{\phi}=0$, and $\ket{\phi} = \sum_{x \neq 0} \alpha_x \ket{x}$ for some $\{\alpha_x\}$. We write down explicitly
\[ \psi := \proj{\psi} = (1-\eps) \proj{0} + \sqrt{\eps(1-\eps)}(\ket{0}\bra{\phi} + \ket{\phi}\bra{0}) + \eps \proj{\phi}. \]
By Lemma \ref{lem:depolarise},
\[ \tr (\mathcal{D}_{\delta}^{\ot n}\, \psi)^2 = \left(\frac{1-\delta^2}{d}\right)^n \sum_{S \subseteq [n]} \gamma^{|S|} \tr\psi_S^2, \]
where we set $\gamma = d \delta^2/(1-\delta^2)$ for brevity. Now
\[ \sum_{S \subseteq [n]} \gamma^{|S|} \tr\psi_S^2 = \sum_{S \subseteq [n]} \gamma^{|S|} \left( \tr ((1-\eps) \ket{0}\bra{0}_S + \sqrt{\eps(1-\eps)}(\ket{0}\bra{\phi}_S + \ket{\phi}\bra{0}_S) + \eps \ket{\phi}\bra{\phi}_S)^2 \right), \]
and for any subset $S$,
\beas
\tr\psi_S^2 &=& (1-\eps)^2 \tr \proj{0}_S^2 + \eps(1-\eps) \tr(\ket{0}\bra{\phi}+\ket{\phi}\bra{0})_S^2 + \eps^2 \tr \proj{\phi}_S^2\\ &+& 2 \sqrt{\eps} (1-\eps)^{3/2} \tr \proj{0}_S(\ket{0}\bra{\phi}+\ket{\phi}\bra{0})_S + 2 \eps(1-\eps) \tr \proj{0}_S \ket{\phi}\bra{\phi}_S\\ &+& 2 \eps^{3/2}\sqrt{1-\eps} \tr \proj{\phi}_S(\ket{0}\bra{\phi}+\ket{\phi}\bra{0})_S.
\eeas
We now bound the sum over $S$ (weighted by $\gamma^{|S|}$) of each of these terms, in order. Note that we repeatedly use the notation $[E]$ for a term which evaluates to 1 if the expression $E$ is true, and 0 if $E$ is false.
\begin{enumerate}
\item As $\ket{0}$ is product, clearly
\[ \sum_{S \subseteq [n]} \gamma^{|S|} \tr \proj{0}_S^2 = \sum_{S \subseteq [n]} \gamma^{|S|} = (1 + \gamma)^n. \]
\item We have
\[ \tr(\ket{0}\bra{\phi}+\ket{\phi}\bra{0})_S^2 = \tr \ket{0}\bra{\phi}_S^2 + \tr \ket{\phi}\bra{0}_S^2 + 2 \tr \ket{0}\bra{\phi}_S \ket{\phi}\bra{0}_S. \]
It is easy to see that the first two terms must be 0 for all $S$ (as only the off-diagonal entries of the first row of the matrix $\ket{0}\bra{\phi}$ can be non-zero). For the third, we explicitly calculate
\[ \ket{0}\bra{\phi}_S \ket{\phi}\bra{0}_S = \sum_{x \neq 0} |\alpha_x|^2 [x_i = 0, \forall i \in \bar{S}] \proj{0}^{\ot k}, \]
and hence
\beas
\sum_{S \subseteq [n]} \gamma^{|S|} \tr \ket{0}\bra{\phi}_S \ket{\phi}\bra{0}_S &=& \sum_{x \neq 0} |\alpha_x|^2 \sum_{S \subseteq [n]} \gamma^{|S|} [x_i = 0, \forall i \in \bar{S}]\\
&=& \sum_{x \neq 0} |\alpha_x|^2 \sum_{k=|x|}^n \gamma^k \binom{n-|x|}{n-k}\\
&=& (1+\gamma)^n \sum_{x \neq 0} |\alpha_x|^2 \left(\frac{\gamma}{1+\gamma}\right)^{|x|}.
\eeas

\item It clearly holds that $\tr \proj{\phi}_S^2 \le 1$, so as in part (1),
\[ \sum_{S \subseteq [n]} \gamma^{|S|} \tr \proj{\phi}_S^2 \le (1 + \gamma)^n, \]
and this will be tight if and only if $\ket{\phi}$ is product itself.

\item Using the same argument as in part (2), $\tr \proj{0}_S \ket{0}\bra{\phi}_S = \tr \proj{0}_S \ket{\phi}\bra{0}_S = 0$.

\item Write the state $\phi = \proj{\phi}$ as
\[ \phi = \sum_{x,y} \phi_{x_1,\dots,y_n} \ket{x_1}\bra{y_1} \otimes \cdots \otimes \ket{x_n} \bra{y_n}.\]
Then, for any $S = \{i_1,\dots,i_k\}$,
\[ \phi_S = \sum_{x,y} [x_i = y_i, \forall i \in \bar{S}] \phi_{x_1,\dots,y_n} \ket{x_{i_1}}\bra{y_{i_1}} \otimes \cdots \otimes \ket{x_{i_k}} \bra{y_{i_k}}, \]
which implies
\[ \tr \proj{0}_S \ket{\phi}\bra{\phi}_S = \sum_x [x_i = 0, \forall i \in S] |\alpha_x|^2, \]
and hence, similarly to part (2),
\beas \sum_{S \subseteq [n]} \gamma^{|S|} \tr \proj{0}_S
\ket{\phi}\bra{\phi}_S
 &=& \sum_{x \neq 0}|\alpha_x|^2\sum_{k=0}^{n-|x|} \gamma^k \binom{n-|x|}{k}\\
 &=& (1+\gamma)^n \sum_{x \neq 0} |\alpha_x|^2 \left(\frac{1}{1+\gamma}\right)^{|x|}.
 \eeas

\item The last term can be trivially bounded using 
\[ |\tr \proj{\phi}_S(\ket{0}\bra{\phi}+\ket{\phi}\bra{0})_S| \le 2. \]
However, it is possible to get a better bound with a bit more work. We expand
\begin{eqnarray*}
\begin{aligned}
\sum_{S \subseteq [n]} \gamma^{|S|} &\tr \proj{\phi}_S \ket{0}\bra{\phi}_S =\\
\sum_{S \subseteq [n]} \gamma^{|S|} &\sum_{x,y,z} \alpha_x \alpha_y^* \alpha_z^* [z_i=0,i \in \bar{S}][x_i=y_i,i \in \bar{S}] \tr \ket{x_1}\braket{y_1}{0}\bra{z_1} \otimes \cdots \otimes \ket{x_n}\braket{y_n}{0}\bra{z_n}\\
&= \sum_{S \subseteq [n]} \gamma^{|S|}  \sum_{x,y,z} \alpha_x \alpha_y^* \alpha_z^* [z_i=0,i \in \bar{S}][x_i=y_i,i \in \bar{S}][y_i=0,i \in S][x_i=z_i,i \in S]\\
&= \sum_{|y\wedge z|=0} \alpha_{y \vee z} \alpha_y^* \alpha_z^* \sum_{S \subseteq [n]} \gamma^{|S|} [y_i=0,i \in S][z_i=0,i \in \bar{S}]\\
&= \sum_{|y\wedge z|=0} \alpha_{y \vee z} \alpha_y^* \alpha_z^* \gamma^{|z|} (1+\gamma)^{n-|y|-|z|}.
\end{aligned}
\end{eqnarray*}
This expression can be upper bounded as follows:
\bea
\lefteqn{\nonumber\sum_{|y\wedge z|=0} \alpha_{y \vee z} \alpha_y^* \alpha_z^* \gamma^{|z|} (1+\gamma)^{-(|y|+|z|)}}\\
\nonumber&\le& \sqrt{\sum_{|y\wedge z|=0} |\alpha_y|^2 |\alpha_z|^2} \sqrt{\sum_{|y\wedge z|=0} \frac{\gamma^{2|z|}}{(1+\gamma)^{2|y \vee z|}} |\alpha_{y \vee z}|^2}
\nn \\
\nonumber&\le& \left(\sum_x (1+\gamma)^{-2 |x|} |\alpha_x|^2 \left(\sum_{|y \wedge z|=0} \gamma^{2|z|}[y \vee z=x]\right)\right)^{1/2} \nn\\
&=& \left(\sum_x \left(\frac{1+\gamma^2}{(1+\gamma)^2}\right)^{|x|} |\alpha_x|^2\right)^{1/2}.
\label{eq:32-ub}
\eea

\end{enumerate}

Combining these terms, we have
\begin{multline*}
\sum_{S \subseteq [n]} \gamma^{|S|} \tr\psi_S^2 \le (1+\gamma)^n ((1-\eps)^2+2\eps(1-\eps)\sum_{x \neq 0} |\alpha_x|^2(1+\gamma)^{-|x|}(\gamma^{|x|}+1) + \eps^2 +\\
4\eps^{3/2}\sqrt{1-\eps} \left(\sum_x \left(\frac{1+\gamma^2}{(1+\gamma)^2}\right)^{|x|} |\alpha_x|^2\right)^{1/2}).
\end{multline*}
%
Note that $(1+\gamma)^{-|x|}(\gamma^{|x|}+1)$ decreases with $|x|$ for all $\gamma > 0$, as does $(1+\gamma^2)^{|x|}(1+\gamma)^{-2|x|}$. To complete the proof, we will show that $\ket{\phi}$ has no weight 1 components (i.e.\ $\alpha_x = 0$ for $|x| < 2$).  In the contribution from \eq{32-ub}, this implies that only the $|x|\geq 4$ terms contribute (since $x=y\vee z$ and $y\wedge z=\emptyset$).
 Therefore, $\ket{\phi}$ having no weight 1 components would imply that
\[\sum_{S \subseteq [n]} \gamma^{|S|} \tr\psi_S^2 \le (1+\gamma)^n \left(1 - \frac{4\epsilon}{(1+\gamma)^2} \left(\gamma(1 - \eps) - \left(\frac{(1+\gamma^2)^2}{(1+\gamma)^2}\right)\eps^{1/2} \right)\right), \]
which would imply the theorem. Now, for any $\theta$, $\varphi$, we have $1-\eps \geq |(\cos\theta \bra 0 + e^{i\varphi} \sin\theta\bra 1)\ot \bra 0^{\ot n-1} \ket \psi|^2$. Picking $\theta$ such that
\[ \cos \theta = \frac{|\braket{0}{\psi}|}{\sqrt{|\braket{0}{\psi}|^2 + |\braket{10^{n-1}}{\psi}|^2}},\]
and $\varphi$ such that $e^{i \varphi} \braket{10^{n-1}}{\psi} > 0$, it is easy to see that
\[ 1-\eps \geq |\cos \theta \braket{0}{\psi} + e^{i \varphi} \sin \theta \braket{10^{n-1}}{\psi}|^2 = |\braket{0}{\psi}|^2 + |\braket{10^{n-1}}{\psi}|^2.\]
However, we have assumed that $1-\eps = |\braket{0}{\psi}|^2$, so this implies that $\braket{10^{n-1}}{\psi}=0$.  Repeating the argument for the other $n-1$ subsystems shows that $\ket\psi$ is indeed orthogonal to every state with Hamming weight at most 1, so $\ket{\phi}$ has no weight 1 components.
\end{proof}


\section{Proof of \ifnotacm{\thmref{prodtest}: }\fi correctness of the product test}
\label{sec:prodtest}

In this appendix, we prove correctness of the product test (Theorem
\ref{thm:prodtest}). Let the test be defined as in
\protoref{prodtest}. The following lemma from \cite{qboolean}
expresses the probability of passing in terms of the partial traces of
the input states; we include a proof for completeness.

\setcounter{thm}{\value{test}}

\begin{lem}
Let $\Ptest(\rho,\sigma)$ denote the probability that the product test passes when applied to two mixed states $\rho,\sigma \in \mathcal{B}(\bbC^{d_1} \ot \cdots \ot \bbC^{d_n})$. Define $\Ptest(\rho) := \Ptest(\rho,\rho)$. Then
\[ \Ptest(\rho,\sigma) = \frac{1}{2^n} \sum_{S \subseteq [n]} \tr \rho_S \sigma_S, \]
and in particular
\[ \Ptest(\rho) = \frac{1}{2^n} \sum_{S \subseteq [n]} \tr \rho_S^2. \]
If $d_1 = d_2 = \dots = d_n = d$, for some $d$, then
\[ \Ptest(\rho) = \left( \frac{d+1}{2} \right)^n \tr (\mathcal{D}_{1/\sqrt{d+1}}^{\ot n}\, \rho)^2. \]
\end{lem}

Note that we can in fact assume that $d_1 = d_2 = \dots = d_n = d$ without
loss of generality by setting $d=\max(d_1,\ldots,d_n)$, and embedding
each of $\bbC^{d_1}, \ldots,\bbC^{d_n}$ into $\bbC^d$ in the natural
way.  This padding operation neither affects the probability
of the swap tests passing nor changes the distance to the closest
product state.

\begin{proof}
Let $\mathcal{F}$ denote the swap (or flip) operator that exchanges two quantum systems of equal but arbitrary dimension, with $\mathcal{F}_S$ denoting the operator that exchanges only the qudits in the set $S$. Then we have
\[
\Ptest(\rho,\sigma) = \tr (\rho \otimes \sigma) \left( \frac{I + \mathcal{F}}{2} \right)^{\otimes n} = \frac{1}{2^n} \sum_{S \subseteq [n]} \tr (\rho \otimes \sigma)\,\mathcal{F}_S = \frac{1}{2^n} \sum_{S \subseteq [n]} \tr \rho_S \sigma_S.
\]
The second part then follows from Lemma \ref{lem:depolarise}.
\end{proof}

We now analyse the probability of the product test passing for general $n$. We first note that, in the special case where $n=2$, it is possible to analyse the probability of passing quite tightly. The proof of the following result, which is implicit in previous work of Wei and Goldbart \cite{wei03}, is essentially immediate from Lemma \ref{lem:test}.

\setcounter{thm}{\value{skip}}
\begin{lem}
\label{lem:bipartite}
Let $\ket{\psi} \in \bbC^{d_1} \otimes \bbC^{d_2}$, where $d_1 \le d_2$, be a bipartite pure state with Schmidt coefficients $\sqrt{\lambda_1} \ge \sqrt{\lambda_2} \ge \dots \ge \sqrt{\lambda_{d_1}}$. Then
\[ \Ptest(\proj{\psi}) = \frac{1}{2}\left(1 + \sum_i \lambda_i^2 \right), \]
while
\[ 1 - \epsilon := \max_{\ket{\phi_1}, \ket{\phi_2}} |\braket{\psi}{\phi_1}\ket{\phi_2}|^2 = \lambda_1. \]
In particular,
\[ 1 - \epsilon + \frac{d_1}{2(d_1-1)} \epsilon^2 \le \Ptest(\proj{\psi}) \le 1 - \epsilon + \epsilon^2. \]
\end{lem}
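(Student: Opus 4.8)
The plan is to specialise \lemref{test} to $n=2$ and then invoke only elementary properties of the Schmidt decomposition. Writing $\psi := \proj\psi$, the four subsets of $[2]$ give $\Ptest(\psi) = \frac14\bigl(\tr\psi_\emptyset^2 + \tr\psi_{\{1\}}^2 + \tr\psi_{\{2\}}^2 + \tr\psi_{\{1,2\}}^2\bigr)$. Here $\psi_\emptyset = \tr\psi = 1$ and $\psi_{\{1,2\}} = \psi$ is pure, so the first and last terms are each $1$; the single-party marginals $\psi_{\{1\}}$ and $\psi_{\{2\}}$ have common nonzero spectrum $\{\lambda_i\}$, so each of the two middle terms equals $\sum_i\lambda_i^2$. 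This immediately yields $\Ptest(\psi) = \frac12\bigl(1 + \sum_i\lambda_i^2\bigr)$.

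Next I would establish $1-\eps = \lambda_1$. Writing $\ket\psi = \sum_i\sqrt{\lambda_i}\ket{e_i}\ket{f_i}$ in a Schmidt basis, for any product state one has $|\braket{\psi}{\phi_1,\phi_2}| = \bigl|\sum_i\sqrt{\lambda_i}\,\overline{\braket{\phi_1}{e_i}}\,\overline{\braket{\phi_2}{f_i}}\bigr| \le \sqrt{\lambda_1}\sum_i|\braket{\phi_1}{e_i}|\,|\braket{\phi_2}{f_i}| \le \sqrt{\lambda_1}$, the last step by Cauchy--Schwarz together with $\sum_i|\braket{\phi_1}{e_i}|^2\le 1$ and $\sum_i|\braket{\phi_2}{f_i}|^2\le 1$; equality holds for $\ket{\phi_1}\ket{\phi_2} = \ket{e_1}\ket{f_1}$. (Equivalently this is the Eckart--Young characterisation of the best rank-one approximation, which is the sense in which the statement is implicit in \cite{wei03}.) Hence $\eps = 1-\lambda_1$, so the tail sum $\sum_{i\ge 2}\lambda_i$ equals $\eps$ and has at most $d_1-1$ nonzero terms.

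The two-sided bound is then pure bookkeeping. For the upper bound, $\sum_{i\ge 2}\lambda_i^2 \le \bigl(\sum_{i\ge 2}\lambda_i\bigr)^2 = \eps^2$ since the $\lambda_i$ are nonnegative, giving $\sum_i\lambda_i^2 \le (1-\eps)^2 + \eps^2$ and hence $\Ptest(\psi) \le 1 - \eps + \eps^2$. For the lower bound, Cauchy--Schwarz in the form $\sum_{i\ge 2}\lambda_i^2 \ge \frac{1}{d_1-1}\bigl(\sum_{i\ge 2}\lambda_i\bigr)^2 = \frac{\eps^2}{d_1-1}$ gives $\sum_i\lambda_i^2 \ge (1-\eps)^2 + \frac{\eps^2}{d_1-1}$ and hence $\Ptest(\psi) \ge 1 - \eps + \frac{d_1}{2(d_1-1)}\eps^2$. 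I do not anticipate any genuine obstacle here: the only step requiring any thought is the max-overlap identity, and even that is a one-line Cauchy--Schwarz argument, while the rest is direct substitution into the formula from \lemref{test}.
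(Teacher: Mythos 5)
Your proof is correct and follows exactly the route the paper intends: the paper states that Lemma~\ref{lem:bipartite} ``is essentially immediate from Lemma \ref{lem:test}'' and omits the details, and your argument (specialising to $n=2$, the Schmidt/Eckart--Young identification of the max overlap with $\lambda_1$, and the two Cauchy--Schwarz-type bounds on $\sum_{i\ge 2}\lambda_i^2$) is precisely the omitted computation. Nothing further is needed.
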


\setcounter{skip}{\value{thm}}

We are finally ready to prove Theorem \ref{thm:prodtest}. The proof is split into two parts, which we formalise as separate theorems. The first part holds when $\epsilon$ is small, and depends on the results proven in Appendix \ref{sec:depolarising}. The second part holds when $\epsilon$ is large, and is proved using the first part.

\setcounter{thm}{\value{prodtestsmalleps}}
\begin{thm}
Given $\ket{\psi}\in \bbC^{d_1} \ot \cdots \ot \bbC^{d_n}$, let
\[ 1 - \eps = \max\{ |\braket{\psi}{\phi_1,\ldots,\phi_n}|^2 : \ket{\phi_i}\in\bbC^{d_i}, 1 \le i \le n\}.\]
Then 
\[ 1 - 2\eps + \eps^2 \le \Ptest(\proj{\psi}) \le 1 - \eps + \eps^2 + \eps^{3/2}.\]
\end{thm}

\begin{proof}
The lower bound holds by general arguments. It is immediate that, if applied to $\ket{\phi_1,\ldots,\phi_n}$, the product test succeeds with probability 1. As the test acts on two copies of $\ket{\psi}$, which has overlap $1-\epsilon$ with $\ket{\phi_1,\ldots,\phi_n}$, it must succeed when applied to $\ket{\psi}$ with probability at least $(1-\epsilon)^2$. The upper bound follows from Lemma \ref{lem:test} and Theorem \ref{thm:stability}. The statement of Theorem \ref{thm:stability} only explicitly covers the case where the dimensions of all the subsystems are the same; however, as noted above, we can assume this without loss of generality.
\end{proof}

This result is close to optimal. At the low end, the state $\ket{\psi} = \sqrt{1-\eps}\ket{0^n} + \sqrt{\eps}\ket{1^n}$ has $\Ptest(\proj{\psi}) = 1 - 2 \eps + 2\eps^2 + o(1)$. At the high end, for $\ket{\psi} = \sqrt{1-\eps}\ket{00} + \sqrt{\eps}\ket{11}$, $\Ptest(\proj{\psi}) = 1 - \eps + \eps^2$. We also note that this result does not extend to a test for separability of mixed states; the maximally mixed state on $n$ qudits is separable but it is easy to verify that $\Ptest(I/d^n) = ((d+1)/2d)^n$, which approaches zero for large $n$.

Theorem \ref{thm:prodtestsmalleps} only gives a non-trivial upper bound on the probability of passing when $\epsilon$ is small (up to $\epsilon = \frac{1}{2}(3 - \sqrt{5}) \approx 0.38$). We now show that the product test also works in the case where the state under consideration is far from any product state. We will need two lemmas.

\setcounter{thm}{\value{skip}}
\begin{lem}
\label{lem:pproduct}
Given $\ket{\psi}\in \bbC^{d_1} \ot \cdots \ot \bbC^{d_n}$, let $\Ptest^P(\proj{\psi})$ be the probability that the $P$-product test -- the test for being product across partition $P$ -- passes. Then, for all $P$, $\Ptest^P(\proj{\psi}) \le \Ptest(\proj{\psi})$.
\end{lem}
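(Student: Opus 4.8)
The plan is to reduce the claim to a comparison of the two tests' acceptance operators, and then to settle that comparison via a symmetric-subspace argument. Both tests act on two copies of $\ket\psi$ and accept exactly when every constituent swap test reports ``same'', so by the proof of \lemref{test} each acceptance probability can be written as $\tr\big((\proj\psi\ot\proj\psi)\,\Pi\big)$ for a suitable projector $\Pi$ assembled from symmetric-subspace projectors. Writing $\mathcal{F}_i$ for the swap acting on the $i$-th pair of corresponding subsystems, the full product test has acceptance operator $\Pi_{\mathrm{full}} = \bigotimes_{i=1}^n \tfrac{I+\mathcal{F}_i}{2}$. For a partition $P = \{g_1,\dots,g_k\}$ of $[n]$, the $P$-product test instead groups the subsystems and performs one swap test per block, giving acceptance operator $\Pi_P = \bigotimes_{g\in P}\tfrac{I+\mathcal{F}_g}{2}$, where $\mathcal{F}_g = \prod_{i\in g}\mathcal{F}_i$ swaps the entire block $g$. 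Since $\proj\psi\ot\proj\psi \succeq 0$, it suffices to prove the operator inequality $\Pi_P \preceq \Pi_{\mathrm{full}}$, from which $\Ptest^P(\proj\psi) \le \Ptest(\proj\psi)$ follows by taking the nonnegative expectation in the state $\ket\psi^{\ot 2}$.

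To establish the operator inequality I would work block by block, exploiting the fact that the block swap factorises as $\mathcal{F}_g = \prod_{i\in g}\mathcal{F}_i$. On each block the two relevant projectors, $\tfrac{I+\mathcal{F}_g}{2}$ and $\bigotimes_{i\in g}\tfrac{I+\mathcal{F}_i}{2}$, are simultaneously diagonalised by the common eigenbasis of the commuting involutions $\{\mathcal{F}_i\}_{i\in g}$, whose joint eigenspaces are labelled by the sign patterns in $(\pm1)^{g}$. In this basis the fully-factored projector is supported on the all-$(+1)$ sector, whereas the block projector is supported on the sectors whose signs multiply to $+1$, so comparing the ranges of the two projectors reduces to a finite, purely combinatorial statement about these sign sectors that must be read off carefully. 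The global inequality is then obtained by assembling the per-block comparisons through the ordered tensor product over the blocks of $P$, using that a tensor product of commuting projectors respects the Loewner order.

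The main obstacle is precisely to pin down the containment in the direction needed to conclude $\Pi_P \preceq \Pi_{\mathrm{full}}$, and hence $\Ptest^P(\proj\psi)\le\Ptest(\proj\psi)$: a careless comparison of the ``grouped'' block-symmetric subspace against the fully-symmetric subspace can flip the order, so the sign-sector analysis above must be carried out with care and then verified to be preserved under the block-wise tensor product. Once the correct per-block ordering and its stability under tensoring are secured, positivity of $\proj\psi\ot\proj\psi$ immediately yields the desired inequality between the two acceptance probabilities, completing the proof.
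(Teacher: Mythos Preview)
Your reduction to an operator inequality between acceptance projectors is exactly the paper's approach; the paper's entire proof is the single sentence ``the subspace corresponding to the usual product test passing is contained within the subspace corresponding to the $P$-product test passing'', i.e.\ $\Pi_{\mathrm{full}}\preceq\Pi_P$. Your sign-sector computation is just a more explicit version of that containment.

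The genuine problem is not with your method but with the target: the inequality in the lemma as printed is backwards. The correct statement---and the one the paper's proof actually establishes, and the one used in the proof of \thmref{prodtestlargeeps} (where $\Ptest^P\le p$ is used to deduce $\Ptest\le p$)---is $\Ptest(\proj\psi)\le\Ptest^P(\proj\psi)$. Your own analysis already proves this correct direction: on a block $g$, the fully-factored projector $\bigotimes_{i\in g}\tfrac{I+\mathcal{F}_i}{2}$ is supported only on the all-$(+1)$ sign sector, whereas $\tfrac{I+\mathcal{F}_g}{2}$ is supported on every sector whose signs have product $+1$. Thus the fully-factored projector has the \emph{smaller} range, giving $\bigotimes_{i\in g}\tfrac{I+\mathcal{F}_i}{2}\preceq\tfrac{I+\mathcal{F}_g}{2}$ per block and hence $\Pi_{\mathrm{full}}\preceq\Pi_P$ globally. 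The inequality $\Pi_P\preceq\Pi_{\mathrm{full}}$ you set out to prove is simply false (e.g.\ $n=2$, $P=\{\{1,2\}\}$: then $\Ptest^P(\proj\psi)=1$ for every pure $\ket\psi$, while $\Ptest(\proj\psi)<1$ whenever $\ket\psi$ is entangled). So the ``flip'' you were worried about is not a danger to be avoided; it is the correct answer.
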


\begin{proof}
The subspace corresponding to the usual product test passing is contained within the subspace corresponding to the $P$-product test passing.
\end{proof}

\begin{lem}
\label{lem:close}
Let $\ket{\psi}$, $\ket{\phi}$ be pure states such that
$|\braket{\psi}{\phi}|^2 = 1 - \epsilon$, and let $P$ satisfy $0\leq
P\leq I$; e.g. $P$ might be a projector. Then
$|\langle\psi|P|\psi\rangle - \langle\phi|P|\phi\rangle| \le
\sqrt{\epsilon}$. 
\end{lem}

\begin{proof}
We can directly calculate $\frac{1}{2}\|\, \proj \psi - \proj\phi\,\|_1 = \sqrt{\eps}$.
This then gives the claimed upper bound on $| \tr P( \proj \psi - \proj \phi)|$ (see \cite[Chapter 9]{nielsen00}).
\end{proof}

\setcounter{skip}{\value{thm}}

\setcounter{thm}{\value{prodtestlargeeps}}

\begin{thm}
Given $\ket{\psi}\in \bbC^{d_1} \ot \cdots \ot \bbC^{d_n}$, let
\[ 1 - \eps = \max\{ |\braket{\psi}{\phi_1,\ldots,\phi_n}|^2 : \ket{\phi_i}\in\bbC^{d_i}, 1 \le i \le n\}.\]
Then, if $\eps \ge 11/32 > 0.343$, $\Ptest(\proj{\psi}) \le 501/512 < 0.979$.
\end{thm}

\begin{proof}
For simplicity, in the proof we will use a quadratic upper bound on $\Ptest(\proj{\psi})$ that follows by elementary methods from Theorem \ref{thm:prodtest}: $\Ptest(\proj{\psi}) \le 1 - \frac{3}{4} \epsilon + 2 \epsilon^2$. For a contradiction, assume that $\Ptest(\proj{\psi}) > p := 501/512$, while $\epsilon \ge 11/32$.

For any partition $P$ of $[n]$ into $1 \le k \le n$ parts, let $\ket{\phi_P}$ be the product state (with respect to $P$) that maximises $|\braket{\psi}{\phi}|^2$ over all product states $\ket{\phi}$ (with respect to $P$). If
\[ 1 - h \le |\braket{\psi}{\phi_P}|^2 \le 1 - \ell, \]
where for readability we define $\ell := 1/32$ and $h := 11/32$, then by the quadratic bound given above the $P$-product test passes with probability $\Ptest^P(\proj{\psi}) \le p$, implying by Lemma \ref{lem:pproduct} that $\Ptest(\proj{\psi}) \le p$. Therefore, as we are assuming that $\ket{\psi}$ is a counterexample to the present theorem, there exists a $k$ such that $|\braket{\psi}{\phi}|^2 > 1 - \ell$ for some $\ket{\phi}$ that is product across $k$ parties, and yet $|\braket{\psi}{\phi}|^2 < 1 - h$ for all $\ket{\phi}$ that are product across $k+1$ parties.

So, for this $k$, let $\ket{\phi_1} \cdots \ket{\phi_k}$ be a state
that maximises $|\braket{\psi}{\phi_1,\ldots,\phi_k}|^2$.  Thus there
is some $\eps' < \ell$ such that we can write $\ket\psi$ as
\[ \ket{\psi} = \sqrt{1 - \epsilon'} \ket{\phi_1} \cdots \ket{\phi_k} + \sqrt{\epsilon'} \ket{\xi}. \]
If $k=1$, then trivially $\ket{\phi_1} = \ket{\psi}$ and $\epsilon'=0$.
Assume without loss of generality that $\ket{\phi_1}$ is a state of
two or more qudits.
Now we know that
\be  \max_{\ket{\phi'_{1,1}},\ket{\phi'_{1,2}}}
|\braket{\phi_1}{\phi'_{1,1}}\ket{\phi'_{1,2}}|^2 
(1-\epsilon') < 1 - h, 
\label{eq:phi1-ent}\ee
or $\ket{\phi'_{1,1}}\ket{\phi'_{1,2}}\ket{\phi_2} \cdots \ket{\phi_k}$ would
be a $(k+1)$-partite state with overlap at least $1 - h$ with
$\ket{\psi}$.   (Here we have used the fact that for $k > 1$, by the arguments at the end of \thmref{stability}, $\ket\xi$ is
orthogonal to $\ket{\phi'_{1,1}}\ket{\phi'_{1,2}}\ket{\phi_2} \cdots
\ket{\phi_k}$ for any choice of 
$\ket{\phi'_{1,1}}$, $\ket{\phi'_{1,2}}$.)
Let $1-\delta = 
 \max_{\ket{\phi'_{1,1}},\ket{\phi'_{1,2}}}
|\braket{\phi_1}{\phi'_{1,1}}\ket{\phi'_{1,2}}|^2$.  Then
\eq{phi1-ent} implies that
$$1-\delta < \frac{1-h}{1-\eps'} < \frac{1-h}{1-\ell} =
\frac{21}{31}.$$
Using the exact expression given in \lemref{bipartite}, we find that $\Ptest(\proj{\phi_1}) < 751/961$ (if $10/31<\delta\le 1/2$, this follows from $\Ptest(\proj{\phi_1}) \le 1+\delta+\delta^2$; if $\delta \ge 1/2$, then $\Ptest(\proj{\phi_1}) \le 3/4$ always). Next we use Lemma \ref{lem:close} to obtain
\bas \Ptest(\proj{\psi}) &\le \Ptest(\proj{\phi_1}\ot \cdots \ot
\proj{\phi_k}) + \sqrt{\eps'}\\
& < \Ptest(\proj{\phi_1}) + \sqrt{\ell} \\ &
< \frac{751}{961} + \sqrt{\frac{1}{32}} < 0.96. \eas
But we previously assumed that $\Ptest(\proj{\psi}) > p > 0.978$. We
have reached a contradiction, so the proof is complete.
\end{proof}

\setcounter{thm}{\value{skip}}

One might hope that this theorem could be improved to show that, as $\epsilon \rightarrow 1$, $\Ptest(\proj{\psi})$ necessarily approaches $0$. However, this is not possible. Consider the $d \times d$-dimensional bipartite state $\ket{\Phi} = \frac{1}{\sqrt{d}} \sum_{i=1}^d \ket{ii}$. It is easy to verify using Lemma \ref{lem:bipartite} that $\Ptest(\proj{\Phi})= 1/2(1+1/d)$ while $\max_{\ket{\phi_1}, \ket{\phi_2}} |\braket{\Phi}{\phi_1}\ket{\phi_2}|^2 = 1/d$.

Combining Theorems \ref{thm:prodtestsmalleps} and \ref{thm:prodtestlargeeps}, we obtain Theorem \ref{thm:prodtest} and thus have proven correctness of the product test. The constants in Theorem \ref{thm:prodtestlargeeps} have not been optimised as far as possible and could be improved somewhat.


\section{Classes of measurement operators}
\label{sec:measure-defs}
In this appendix, we define the classes of measurement operators used
in our paper and other relevant literature on $\QMA(2)$, such as
\cite{brandao11}. Our definitions mostly follow the conventions of
quantum information theory. Each class of measurement operators
describes operators on $\bbC^d \ot \bbC^d$.

\begin{itemize}
\item BELL is the set of $M$ that can be expressed as
\be M = \sum_{(i,j)\in S} \alpha_i \ot \beta_j,\ee
where $\sum_i \alpha_i = I$ and $\sum_j \beta_j = I$, and $S$ is a set
of pairs of indices.  In other words, the systems are locally
measured, obtaining outcomes $i$ and $j$, and then the verifier
accepts if $(i,j)\in S$.
\item LOCC$_1$ is the set of $M$ that can be realised by measuring the
  first system and then choosing a measurement on the second system
  conditional on the outcome of the first measurement.  Such $M$ can
  be written as
\be M = \sum_i \alpha_i \ot M_i, \ee
where $\sum_i \alpha_i = I$ and $0\leq M_i \leq I$ for each $i$.
\item LOCC is the set of $M$ that can be realised by alternating
partial  measurements on the two systems a finite number of times,
choosing each measurement conditioned on the previous outcomes.  An
inductive definition is that M is in LOCC if there exist operators
$\{E_i\}, \{M_i\}$, with $\sum_i E_i\leq I$ and each $M_i\in{\rm LOCC}$,
such that either $M = \sum_i (\sqrt{E_i}\ot I)M_i(\sqrt{E_i}\ot I)$ or
$M = \sum_i (I\ot \sqrt{E_i})M_i(I\ot \sqrt{E_i})$.   For the base
case, it suffices to take $I\in {\rm LOCC}$.
\item $\SEP$ is the set of $M$ such that
\be M = \sum_i \alpha_i \ot \beta_i
\label{eq:sep-def}\ee
for some positive semidefinite (WLOG rank one) matrices $\{\alpha_i\},\{\beta_i\}$.
(Note: other works define $\SEP$ to be the smaller set of $M$ for which both $M$ and $I-M$ can be decomposed as in \eq{sep-def}, and use the term $\SEP_{\rm YES}$ to describe the measurements for which only $M$ has to satisfy \eq{sep-def}.)
\item SEP-BOTH is the set of $M$ for which $M\in\SEP$ and $I-M\in\SEP$.
\item PPT (positive partial transpose~\cite{PPT1,PPT2})
 is the set of $M$ for which $M^\Gamma\geq 0$, where $\Gamma$
  is the partial transpose map defined by 
$(\ket i\bra j \ot \ket  k\bra l)^\Gamma = 
(\ket i\bra j \ot \ket  l\bra k)$.
Again note that this definition does not require $I-M\in\PPT$.
\item PPT-BOTH is the set of $M$ for which $M\in\PPT$ and $I-M\in\PPT$.
\item ALL has no restrictions on $M$ other than $0\leq M\leq I$.
\end{itemize}
We note that SEP-BOTH and PPT-BOTH are natural relaxations of LOCC
because they preserve the property that both $M$ and $I-M$ must be
realisable through local operations and classical communication.  On
the other hand, SEP and PPT are more natural when we consider $M$ by
itself and do not wish to consider additional constraints on $I-M$.

These sets satisfy the following inclusions, all of which are known to
be strict
$$
\begin{array}{ccccccccccc}
{\rm BELL} &\subset& {\rm LOCC}_1 &\subset& {\rm LOCC}& \subset &
\text{SEP-BOTH} & \subset & \text{PPT-BOTH} \\
&&&&&&\cap & & \cap \\
&&&&&&{\rm SEP} & \subset & \text{PPT} & \subset & {\rm ALL}
\end{array}
$$

\section{Nonexistence of an LOCC product test}
\label{sec:locc}

A natural extension of the idea of product state testing is to a distributed setting where two parties, each of whom receives one copy of an $n$-partite state $\ket{\psi}$, must determine whether $\ket{\psi}$ is product using only local operations and classical communication (LOCC). Indeed, following the completion of an initial version of this work, it was shown by Brand\~ao, Christandl and Yard that, if there were an efficient LOCC protocol for product state testing, then $\QMA(k) = \QMA$ \cite{brandao11,brandao10a}.

In this appendix, we show that unfortunately no such LOCC protocol exists.
In fact, we rule out the larger class of PPT-BOTH 
measurements (defined in Appendix~\ref{sec:measure-defs}).
 Our impossibility result holds for the easiest version
of this task, in which $n=2$. 
For simplicity, here we only consider the case where the test uses 2 copies of $\ket{\psi}$;
one can show a similar result when the number of copies is larger but the proof is
significantly more complicated~\cite{Harrow-perm}.

Formally, we define a product test as a measurement $\{M,I-M\}$ that
acts on $\ket{\psi}^{\ot 2} = \ket{\psi}^{A_1B_1} \ot \ket{\psi}^{A_2B_2}$ with
outcome $M$ corresponding to ``product'' and $I-M$ corresponding to
``not product.''  There is no good canonical way to express the
validity of a product test.  One rather general way we might do this
is to say that there are functions $f(\eps)$ and $g(\eps)$ such that
if $\ket{\psi}$ has overlap $1-\eps$ with the closest product state
then its probabity $\Ptest$ of passing the product test satisfies
\be f(\eps) \leq \Ptest \leq g(\eps).
\label{eq:ptest-bounds}\ee  
For example,  \thmref{prodtest} shows that our product test satisfies
\eq{ptest-bounds} with $f(\eps)=1-c_1\eps$ and $g(\eps)=1-c_2\eps$
with $c_1 > c_2 > 0$.

For our impossibility result, we will use a different and simpler
success measure.  Define the completeness $c$ of a product test to be
the average probability of accepting a random product state
$\ket{\psi} = \ket{\psi_A} \ot \ket{\psi_B}$, and define the soundness
$s$ to be the average probability of accepting a random bipartite
state $\ket{\psi}$.  While strictly speaking a random bipartite state
may sometimes be close to a product state, it has an overwhelmingly
high probability of being close to maximally entangled.  Thus, this
demand for a product test is nearly as undemanding as possible. Finally,
define the bias of the test as $b = c-s$.

\begin{thm}
Any 2-copy PPT-BOTH product test for bipartite $d \times
d$-dimensional product states has bias which is $O(1/d)$. 
\end{thm}

\begin{proof}
Let $\ket{\psi}$ be a bipartite $d \times d$-dimensional state on the system AB. Imagine we have a protocol which takes as input two copies of $\ket{\psi}$, written $\ket{\psi}_1 \ket{\psi}_2$, and attempts to determine whether $\ket{\psi}$ is product across systems A and B. Consider two distributions $\mathcal{D}_0$, $\mathcal{D}_1$ on bipartite $d \times d$ states. Let $M$ be a measurement operator which accepts states drawn from $\mathcal{D}_1$ with probability at least $c$, and rejects states from $\mathcal{D}_0$ with probability at least $s$. Then
\[ \bbE_{\psi \sim \mathcal{D}_1} \bbE_{\phi \sim \mathcal{D}_0} \tr M(\psi \ot \psi - \phi \ot \phi) \ge c - s, \]
implying
\[ \tr M( \bbE_{\psi \sim \mathcal{D}_1}(\psi \ot \psi) - \bbE_{\phi \sim \mathcal{D}_0} (\phi \ot \phi)) \ge c - s. \]
Taking $\mathcal{D}_1$ to be the uniform distribution over product states, via a standard calculation we obtain
\[ \bbE_{\psi \sim \mathcal{D}_1}(\psi \ot \psi) = \bbE_{\psi} \bbE_{\phi} (\psi_A \ot \phi_B)^{\ot 2} = \left(\frac{1}{d(d+1)}(I + \mathcal{F})_{A_1 A_2}\right) \ot \left(\frac{1}{d(d+1)}(I + \mathcal{F})_{B_1 B_2}\right), \]
where as elsewhere $\mathcal{F}$ is the swap operator. Now let $\mathcal{D}_0$ be the uniform distribution on bipartite $d \times d$ states. In this case we have
\[ \bbE_{\phi \sim \mathcal{D}_0}(\phi \ot \phi) = \frac{1}{d^2(d^2+1)}(I + \mathcal{F})_{12}. \]
Thus
\beas
\Delta &:=& \bbE_{\psi \sim \mathcal{D}_1}(\psi \ot \psi) - \bbE_{\phi \sim \mathcal{D}_0}(\phi \ot \phi)\\
&=& \frac{1}{d^2(d+1)^2}\left(I_{A_1 A_2} \ot \mathcal{F}_{B_1 B_2} + \mathcal{F}_{A_1 A_2} \ot I_{B_1 B_2} \right) - \frac{2}{d(d^2+1)(d+1)^2} \left( I + \mathcal{F}_{12} \right).
\eeas
Assume that $M$ is PPT across the 1:2 split. We want to maximise $\tr M \Delta$ assuming that $-I \le M \le I$ and $-I \le M^\Gamma \le I$, where the second is the PPT constraint. Further, as distributions $\mathcal{D}_0$ and $\mathcal{D}_1$ are invariant under product unitaries, we can without loss of generality assume that $M$ commutes with $U_{A_1} \ot U_{A_2} \ot V_{B_1} \ot V_{B_2}$ for all unitaries $U$ and $V$, which implies that
\[ M = w I + x(\mathcal{F}_{A_1 A_2} \ot I_{B_1 B_2}) + y (I_{A_1 A_2} \ot \mathcal{F}_{B_1 B_2}) + z \mathcal{F}_{12} \]
for some $w$, $x$, $y$, $z$. By direct calculation
\beas
\tr M \Delta &=& \frac{1}{d^2(d+1)^2}\left( 2d^3 w + (d^2+d^4)x + (d^2+d^4)y + 2d^3 z \right) + O(1/d)\\
&=& x + y + O(1/d).
\eeas
On the other hand,
\[ M^\Gamma = w I + x d(\Phi_{A_1 A_2} \ot I_{B_1 B_2}) + y d(I_{A_1 A_2} \ot \Phi_{B_1 B_2}) + z d^2 (\Phi_{A_1 A_2} \ot \Phi_{B_1 B_2}), \]
where $\ket{\Phi} = \frac{1}{\sqrt{d}} \sum_{i=1}^d \ket{ii}$. So $|x + y| = O(1/d)$, and we are done.
\end{proof}


\section{Proof of correctness of the protocol to put \texorpdfstring{$\QMA(k)$}{QMA(k)} in \texorpdfstring{$\QMA(2)$}{QMA(2)}}
\label{sec:protoproof}

This section proves several of the claims made in \secref{QMA}.  First we prove \lemref{qmak-sim} by showing the validity of \protoref{qmak}.

\begin{replem}{lem:qmak-sim}
\lemqmaksim
\end{replem}

\begin{proof}
  It is obvious that this protocol achieves completeness $(1+c)/2$: if the
  Merlins follow the protocol, the product test passes with certainty,
  so Arthur either accepts with probability 1, or with
  the same probability that $\mathcal{A}$ accepts, which is at least $c$.
  Showing soundness is somewhat more complicated.


Assume that Arthur receives states $\ket{\phi_1}$ and $\ket{\phi_2}$
such the maximal overlap of $\ket{\phi_1}$ (resp.\ $\ket{\phi_2}$)
with a product state is $1-\epsilon_1$ (resp.\ $1-\epsilon_2$), and
set $\epsilon = \frac{1}{2}(\epsilon_1 + \epsilon_2)$. Further assume
that the product test would accept $\ket{\phi_1}^{\otimes 2}$ (resp.\
$\ket{\phi_2}^{\otimes 2}$) with probability $1-\delta_1$ (resp.\
$1-\delta_2$). 

Let $1-\delta$ be the probability that the product test would accept
$\ket{\phi_1}\ot \ket{\phi_2}$. We first show that this can be upper
bounded in terms of the probabilities of accepting $\ket{\phi_1}^{\ot 2}$ and
$\ket{\phi_2}^{\ot 2}$. The probability that the product test accepts is 
\beas
\frac{1}{2^k} \sum_{S \subseteq [k]} \tr (\phi_1)_S (\phi_2)_S & \leq & 
\frac{1}{2^k} \sum_{S \subseteq [k]} \sqrt{\tr (\phi_1)_S^2} \sqrt{\tr (\phi_2)_S^2}\\
&\leq&  \frac{1}{2^k} \sum_{S \subseteq [k]}
\frac{ \tr (\phi_1)_S^2 + \tr (\phi_2)_S^2}{2}
\\
&=& \frac{1}{2} \left( \Ptest(\phi_1) + \Ptest(\phi_2) \right)\\
&=& 1 - \frac{1}{2} \left( \delta_1 + \delta_2 \right).
\eeas
Thus we have the bound from \thmref{prodtest} that $\delta
\geq \frac{11}{512}\epsilon$.  
On the other hand, if Arther chooses to measure $M$, then by
\lemref{close} his probability of accepting is $\leq s
+ \frac{\sqrt{\eps_1}+\sqrt{\eps_2}}{2} \leq s + \sqrt{\eps}$.
 Combining the two tests, we find that the acceptance
probability is
\ba s' \leq \max_{\eps \leq \frac{512}{11}\delta}
\frac{1-\delta + \min(1,s + \sqrt{\eps})}{2}
 \leq 1 - \frac{(1-s)^2}{100}.
\label{eq:soundness-loss}\ea
To obtain the last inequality, we observe that the worst case is
obtained when $\sqrt{\eps}=1-s=\sqrt{\frac{512}{11}\delta}$.

As a result of \eq{soundness-loss}, a $k$-prover soundness-$s$ protocol can be
simulated by a 2-prover protocol with soundness $s'$.  If $k\leq
\poly(n)$, then the messages will still have a polynomial number of
qubits.
\end{proof}

\ifacm
\begin{remark}[]
\else
\begin{remark}
\fi
Our choice of protocol in \protoref{qmak} was carefully designed to yield a separable protocol.  One subtlety in doing so is that measurements in SEP do not compose the same way that separable operations do.  Indeed, if $A,B\in \SEP$, then it does {\em not} follow that $A^{1/2} B A^{1/2}\in \SEP$, even in the case when $A$ is a projector.  For example, let $A$ project onto the symmetric subspace and let $B = \proj{0,1}$.  On the other hand, other choices of protocol could also yield separable measurements.  For example, $(M \ot I)P(M\ot I)$ would work (after using \lemref{completeness} to amplify completeness), where $P$ denotes the product test, $M$ is a measurement on Alice's $k$ systems and $I$ acts on Bob's $k$ systems.  We are grateful to Fernando Brand\~ao for pointing out this issue to us.
\end{remark}

Next, we prove \lemref{qma-sep-ampl}.
\begin{replem}{lem:qma-sep-ampl}
\lemqmasepampl
\end{replem}

\begin{proof}
In the original protocol, Arthur performs a measurement $M$ on $k$ states, each comprising $m$ qubits.  If the input is a YES instance, then there exists a product state on which $M$ has expectation value $\geq c$, whereas if the input is a NO instance, then for all product states, $M$ has expectation value $\leq s$.

For the modified protocol, each of the $k$ provers submits $\ell m$ qubits, and Arthur's measurement is $M^{\ot \ell}$.  If the input is a YES instance, then the provers can submit $\ell$ copies of the optimal input to the original protocol.  This state is still product (across the $k$ provers) and has probability $\geq c^\ell$ of being accepted.

The more interesting case is when the input is a NO instance.  In general, the provers can submit states that are entangled across the $\ell$ different parts of their message.   However, there cannot be any entanglement between the $k$ different provers.

 For the new protocol, we can imagine Arthur sequentially performing the measurement $\{M, I-M\}$ $\ell$ times and accepting only if the outcome is $M$ each time.  Since the input is a NO instance, if this measurement is applied to a state, pure or mixed, that is separable across the $k$ parties, then outcome $M$ will occur with probability $\leq s$.  Therefore the probability that all $\ell$ measurements have outcome $M$ will be $\leq s^\ell$ as long as conditioning on outcome $M$ does not induce any entanglement in the unmeasured states.  We need not consider the state that remains after outcome $I-M$, since Arthur rejects immediately when this outcome occurs.
 
 The final step of the proof is to show that applying the measurement $\{M,I-M\}$ to the first system of a multipartite product state and obtaining outcome $M$ will not create any entanglement across the $k$ provers. Suppose that the $i^{\text{th}}$ Merlin supplies the $\ell m$-qubit state $\ket{\varphi^{(i)}}^{P_1^i\ldots P_\ell^i}$, where each $P_j^i$ is an $m$-qubit system.  The original measurement $M$ is separable, so can be written as
$$M = \sum_j a_j \proj{\alpha_j^{(1)}} \ot \cdots \ot \proj{\alpha_j^{(k)}},$$
for $a_j\geq 0$ and $\ket{\alpha_j^{(i)}}$ unit vectors on $m$ qubits.  This measurement is applied sequentially to $P_1^{1,\ldots,k}$, then $P_2^{1,\ldots,k}$, and so on until $P_\ell^{1,\ldots,k}$.   (Here we write $P_j^{1,\ldots,k}$ as shorthand for $P_j^1\cdots P_j^k$.) When the measurement is applied to $P_1^{1, \ldots, k}$ and the outcome is $M$, the residual state is proportional to
\begin{multline*}
\tr_{P_1^{1,\ldots,k}} M^{P_1^{1, \ldots,k}} 
\L( \proj{\varphi^{(1)}}^{P_{1,\ldots,\ell}^1} \otimes \cdots \ot
\proj{\varphi^{(k)}}^{P_{1,\ldots,\ell}^k}\R)
\\ = \sum_j a_j \bigotimes_{i=1}^k \bra{\alpha_j^{(1)}}^{P_1^i} \ot I^{P_{2,\ldots,\ell}^i}
\proj{\varphi^{(i)}}^{P_{1,\ldots,\ell}^i} 
\ket{\alpha_j^{(1)}}^{P_1^i} \ot I^{P_{2.\ldots,\ell}^i},
\end{multline*}
 which is separable across the $P_{2,\ldots,\ell}^1:P_{2,\ldots,\ell}^2:\cdots:P_{2,\ldots,\ell}^k$ cut.
Therefore, by induction, the state always remain separable as long as outcome $M$ always occurs.  
 \end{proof}

For the reader's convenience, we briefly summarise here the proof of \lemref{completeness} (originally due to \cite{kobayashi03,Unentanglement}).
\begin{replem}{lem:completeness}
\lemcompleteness
\end{replem}
\begin{proof}
 The idea is to repeat the basic protocol $\ell$ times, and accept if there are $\geq \frac{c+s}{2}\ell$ ``accept'' outcomes or reject otherwise.  For YES instances, the provers can send $\ell$ copies of the same proofs, each of which will be accepted with probability $\geq c$.  Then a Chernoff bound yields that the completeness is $\geq 1-\exp(-\ell(c-s)^2/2)$.  For NO instances, each of the $\ell$ copies has probability $\leq s$ of being accepted.  Since the provers may submit entangled states, we can no longer guarantee that these events are independent, but still Markov's inequality implies that the probability of $\geq \frac{c+s}{2}\ell$ accept outcomes is $\leq \frac{2s}{c+s} = \frac{1}{1+(c-s)/2} \leq 1 - \frac{c-s}{3}$.  This last step uses the fact that $1/(1+x/2) \leq 1-x/3$, for $0\leq x \leq 1$.
\end{proof}

We now complete the proof of \thmref{qma2k}.
\begin{repthm}{thm:qma2k}
\thmqma2k
\end{repthm}
\begin{proof}
For the case of perfect completeness, we use \lemref{qmak-sim-sep} to replace the original $k$-prover protocol with a 2-prover protocol that still has perfect completeness, still has $1-1/q(n)$ soundness (where $q(n)$ is still a  polynomial of $n$) and now has a separable measurement performed by the verifier.  Now we simply repeat (using \lemref{qma-sep-ampl}) $q(n)\cdot p(n)$ times, and obtain soundness $\leq (1-1/q(n))^{q(n)p(n)} \leq e^{-p(n)}$.

When we merely have $c-s > 1/q(n)$, for some polynomial $q(n)$, then we first have to apply \lemref{completeness} with $\ell = p(n) + \log(243q(n)^2p(n))$  to replace the soundness with $1-1/3q(n)$ and the completeness with $1-\frac{\exp(-p(n))}{243q(n)^2p(n)}$.  Next, we apply \lemref{qmak-sim-sep} to leave the completeness the same, reduce the number of provers to 2, guarantee the measurement is in $\SEP$ and replace the soundness with $1-1/243q(n)^2$.  Finally, we repeat $243q(n)^2p(n)$ times and obtain soundness $\exp(-p(n))$ and completeness $1-\exp(-p(n))$.
\end{proof}

\section{Proof of correctness of the product unitary test}
\label{sec:unitary-proof}

This appendix is devoted to the proof of \thmref{unitary-test}.
In order to analyse the product unitary test in
\protoref{produnitarytest}, we will need to relate the maximum
overlap of an $n$-qudit unitary with a product operator
to the maximum overlap of that unitary with a product unitary.

\begin{lem}
\label{lem:nearestunitary}
Given $U\in U(d^n)$, let
\[ 1 - \eps = \max\{ \left|\hsip{U}{A_1 \otimes \dots \otimes A_n}\right|^2 : A_i \in M(d), \hsip{A_i}{A_i} = 1, 1 \le i \le n\}.\]
Then, if $\epsilon \le 1/2$, there exist $V_1,\dots,V_n \in U(d)$ such that $\left|\hsip{U}{V_1 \otimes \dots \otimes V_n}\right|^2 \ge (1-2 \epsilon)^2$.
\end{lem}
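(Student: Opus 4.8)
The plan is to pick a product operator $A_1\otimes\cdots\otimes A_n$ with $\hsip{A_i}{A_i}=1$ that attains the maximum defining $\eps$, so that, after absorbing a phase into $A_1$, $\hsip{U}{A_1\otimes\cdots\otimes A_n}=\sqrt{1-\eps}$. I would then write each $A_i$ in polar form $A_i=V_iP_i$, with $V_i\in U(d)$ and $P_i=(A_i^\dagger A_i)^{1/2}\ge 0$, and take these $V_i$ as the sought-for unitaries. Since $\tfrac1d\tr P_i^2=\hsip{A_i}{A_i}=1$ and $\hsip{A_i}{V_i}=\tfrac1d\tr P_i$, each $\hsip{A_i}{V_i}=:1-\delta_i$ is a real number in $[0,1]$, so $\pi:=\hsip{A_1\otimes\cdots\otimes A_n}{V_1\otimes\cdots\otimes V_n}=\prod_i(1-\delta_i)\in[0,1]$.

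The crux of the argument --- and the only place the unitarity of $U$ enters --- is the inequality $\pi\ge\sqrt{1-\eps}$. To prove it, set $W:=(V_1\otimes\cdots\otimes V_n)^\dagger U$, which is unitary, and observe that $\sqrt{1-\eps}=\hsip{U}{A_1\otimes\cdots\otimes A_n}=\hsip{W}{P_1\otimes\cdots\otimes P_n}$; as this is a real number it equals $\tfrac1{d^n}\tr\!\big[\tfrac12(W+W^\dagger)\,(P_1\otimes\cdots\otimes P_n)\big]$, and since $\tfrac12(W+W^\dagger)\le I$ while $P_1\otimes\cdots\otimes P_n\ge 0$ this is at most $\tfrac1{d^n}\tr(P_1\otimes\cdots\otimes P_n)=\pi$. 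I expect this to be the main obstacle to spot: without it, simply replacing $\bigotimes_iA_i$ by $\bigotimes_iV_i$ and bounding the Hilbert--Schmidt error only yields a worthless $1-O(\sqrt\eps)$ estimate, since a priori $\pi$ could be as small as $d^{-n/2}$, and it is precisely unitarity of $U$ that forbids this.

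Granting $\pi\ge\sqrt{1-\eps}$, the rest is a two-dimensional estimate. Write $U=\sqrt{1-\eps}\,(A_1\otimes\cdots\otimes A_n)+\sqrt\eps\,R$ and $V_1\otimes\cdots\otimes V_n=\pi\,(A_1\otimes\cdots\otimes A_n)+\sqrt{1-\pi^2}\,S$, where $R,S$ are Hilbert--Schmidt unit vectors orthogonal to $A_1\otimes\cdots\otimes A_n$ (this uses that $\hsip{U}{A_1\otimes\cdots\otimes A_n}$ is real and $\pi\le 1$). Then
\[\hsip{U}{V_1\otimes\cdots\otimes V_n}=\sqrt{1-\eps}\,\pi+\sqrt{\eps(1-\pi^2)}\,\hsip{R}{S},\]
so $\Real\,\hsip{U}{V_1\otimes\cdots\otimes V_n}\ge\sqrt{1-\eps}\,\pi-\sqrt{\eps(1-\pi^2)}$ by Cauchy--Schwarz. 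The right-hand side is increasing in $\pi$ on $[0,1]$, so over $\pi\in[\sqrt{1-\eps},1]$ it is minimised at $\pi=\sqrt{1-\eps}$, where it equals $(1-\eps)-\eps=1-2\eps$. As $\eps\le\tfrac12$ we have $1-2\eps\ge 0$, hence $\big|\hsip{U}{V_1\otimes\cdots\otimes V_n}\big|\ge 1-2\eps$, which is the claim. (The degenerate case $\eps=0$, where $R$ is not defined, is immediate: then $U$ is a product of unit-norm operators, and a unitary product forces each factor to be unitary.)
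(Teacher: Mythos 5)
Your proof is correct and follows essentially the same route as the paper's: both extract the candidate unitaries from the polar decomposition of the optimal $A_1\otimes\cdots\otimes A_n$, both establish the key inequality $\hsip{A_1\otimes\cdots\otimes A_n}{V_1\otimes\cdots\otimes V_n}\ge\sqrt{1-\eps}$ (the paper via trace-norm duality $\tr|A|=\max_W|\tr WA|$, you via $\tfrac12(W+W^\dagger)\le I$ applied to a positive operator --- the same underlying fact), and both conclude with the same Cauchy--Schwarz estimate in the two-dimensional subspace spanned by $A$ and its orthogonal complement. I see no gaps.
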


\begin{proof}
For all $1 \le i \le n$, let the polar decomposition of $A_i$ be $|A_i| C_i$, where $|A_i| = \sqrt{A_i A_i^\dag}$ and $C_i \in U(d)$. Set $A = \bigotimes_{i=1}^n A_i$, $C = \bigotimes_{i=1}^n C_i$. Then
\[ \hsip{C}{A} = \frac{1}{d^n} \prod_{i=1}^n \tr C_i^\dag |A_i| C_i = \frac{1}{d^n} \prod_{i=1}^n \tr |A_i| = \frac{1}{d^n} \max_{V \in U(d^n)} | \tr V A | \ge \sqrt{1 - \epsilon}. \]
This implies that we can expand
\[ U = \sqrt{1-\epsilon}\,A + D, \; \; C = \sqrt{1-\epsilon'}\,A + E \]
for some $\epsilon' \le \epsilon$ and matrices $D, E$ such that $\hsip{D}{D}=\epsilon$, $\hsip{E}{E}=\epsilon'$, $\hsip{A}{D} = 0$, $\hsip{A}{E} = 0$. So
\[ |\hsip{U}{C}| = | \sqrt{1-\epsilon} \sqrt{1-\epsilon'} + \hsip{D}{E} | \ge | \sqrt{1-\epsilon} \sqrt{1-\epsilon'} - \sqrt{\epsilon}\sqrt{\epsilon'}| \ge 1 - 2 \epsilon, \]
for $\epsilon \le 1/2$. This implies the lemma.
\end{proof}

We are now ready to prove correctness of the product unitary test. 
\begin{repthm}{thm:unitary-test}
\thmunitarytest
\end{repthm}
\begin{proof}
By the Choi-Jamio\l{}kowski isomorphism, there is a direct correspondence
between operators $M \in M(d)$ with $|\hsip{M}{M}|=1$ and normalised quantum states $\ket{v(M)}$. If we define
\[ 1 - \eps' := \max\{ \left|\hsip{U}{A_1 \otimes \dots \otimes A_n}\right|^2 : A_i \in M(d), \hsip{A_i}{A_i} = 1, 1 \le i \le n\}, \]
then by Theorem \ref{thm:prodtest}, if $\eps' \lesssim 0.0265$, $\Ptest(U) \le 1 - \eps' + \eps'^2 + \eps'^{3/2}$, and if $\eps' \gtrsim 0.0265$, $\Ptest(U) \le 501/512$. If $\eps' \ge 1/2$, then the result follows immediately. On the other hand, by Lemma \ref{lem:nearestunitary}, if $\eps' \le 1/2$, there exist $V_1,\dots,V_n \in U(d)$ such that $\left|\hsip{U}{V_1 \otimes \dots \otimes V_n}\right|^2 \ge (1-2 \epsilon')^2 \ge 1 - 4 \epsilon'$. Thus we have $\frac{1}{4} \epsilon \le \epsilon' \le \epsilon$. The theorem follows by combining the bound on $\eps$ and the bound on $\Ptest(U)$.
\end{proof}


\section{Interpretation of the product test as an average over product states}
\label{sec:average}

We have seen (via Lemma \ref{lem:test}) that the probability of the
product test passing when applied to some state $\ket{\psi} \in
(\bbC^d)^{\ot n}$ is equal to the average purity, across all choices
of subsystem $S \subseteq [n]$, of $\tr \proj \psi_S$.  One
interpretation of the proof of correctness of the product test is
therefore that, if the average entanglement of $\ket{\psi}$ across all
bipartite partitions of $[n]$ is low, as measured by the purity, then
$\ket{\psi}$ must in fact be close to a product state across all
subsystems.

In this appendix, we discuss a similar interpretation of our results in terms of
an average over product states, via the following proposition. 

\begin{prop}
\label{prop:prodoverlaps}
Given $\ket{\psi}\in (\bbC^d)^{\ot n}$,
\[ \Ptest(\proj{\psi}) = \left(\frac{d(d+1)}{2}\right)^n \bbE_{\ket{\phi_1},\dots,\ket{\phi_n}} \left[ |\braket{\psi}{\phi_1\dots\phi_n}|^4 \right] . \]
\end{prop}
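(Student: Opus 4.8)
The plan is to evaluate the right-hand average directly, treating each $\ket{\phi_i}$ as an independent Haar-random pure state of $\bbC^d$ (which is what the notation $\bbE_{\ket{\phi_1},\dots,\ket{\phi_n}}$ is taken to mean). The first step is to rewrite the fourth power of the overlap as a bilinear expression in two copies of $\ket\psi$: since $|\braket{\psi}{\phi_1\cdots\phi_n}|^2 = \bra{\psi}(\proj{\phi_1}\ot\cdots\ot\proj{\phi_n})\ket{\psi}$, squaring gives
\[ |\braket{\psi}{\phi_1\cdots\phi_n}|^4 = \bra{\psi}^{\ot 2}\Big(\bigotimes_{i=1}^n \proj{\phi_i}^{\ot 2}\Big)\ket{\psi}^{\ot 2}, \]
where on the right the $i$-th factor $\proj{\phi_i}^{\ot 2}$ is understood to act on qudit $i$ of the first copy of $\ket\psi$ together with qudit $i$ of the second copy.

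Next I would use independence of the $\ket{\phi_i}$ to push the expectation inside the tensor product, reducing everything to the single-system second moment $\bbE_{\ket\phi}\proj{\phi}^{\ot 2}$. By Schur's lemma this operator commutes with every $U\ot U$ and hence is a multiple of the projector $\Pi^{\mathrm{sym}}_{d,2}$ onto $\Sym^2\bbC^d$; taking the trace and using $\dim\Sym^2\bbC^d = d(d+1)/2$ fixes the constant, giving $\bbE_{\ket\phi}\proj{\phi}^{\ot 2} = \tfrac{2}{d(d+1)}\Pi^{\mathrm{sym}}_{d,2} = \tfrac{2}{d(d+1)}\cdot\tfrac{I+\cF}{2}$, with $\cF$ the swap operator on $(\bbC^d)^{\ot 2}$.

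Then, substituting this back and regrouping the $2n$ qudits into the original two copies of $\ket\psi$, the product of the $n$ symmetric projectors is exactly the ``accept'' operator of the product test, so that
\[ \bbE_{\ket{\phi_1},\dots,\ket{\phi_n}}\,|\braket{\psi}{\phi_1\cdots\phi_n}|^4 = \Big(\tfrac{2}{d(d+1)}\Big)^n \bra{\psi}^{\ot 2}\Big(\tfrac{I+\cF}{2}\Big)^{\ot n}\ket{\psi}^{\ot 2} = \Big(\tfrac{2}{d(d+1)}\Big)^n \Ptest(\proj\psi), \]
the last equality being the expression for $\Ptest$ established in the proof of Lemma \ref{lem:test}. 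Rearranging gives the proposition.

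I do not expect a genuine obstacle here. The only care needed is the bookkeeping that matches ``the $i$-th qudit of each of the two copies of $\ket\psi$'' with ``the $i$-th pair on which the swap test acts'', together with the entirely standard evaluation of the Haar second moment; if one wishes to avoid invoking representation theory for the latter, one can instead integrate $|\braket{\phi}{a}|^2|\braket{\phi}{b}|^2$ over $\ket\phi$ by elementary means and polarise. One could also remark that the same computation, carried out with $\Pi^{\mathrm{sym}}_{d,k}$ in place of $\Pi^{\mathrm{sym}}_{d,2}$, expresses the higher-copy product-state tests of Appendix \ref{sec:optimal} as averages of $|\braket{\psi}{\phi_1\cdots\phi_n}|^{2k}$ over random product states.
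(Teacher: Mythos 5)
Your proposal is correct and follows essentially the same route as the paper's proof: rewrite $|\braket{\psi}{\phi_1\cdots\phi_n}|^4$ as a bilinear form on two copies of $\ket\psi$, use independence to reduce to the single-system Haar second moment $\bbE_{\ket\phi}\,\proj{\phi}^{\ot 2} = \frac{I+\cF}{d(d+1)}$, and recognise the resulting operator as the product test's accept projector via Lemma~\ref{lem:test}. The only cosmetic difference is that you justify the second-moment formula by Schur's lemma, whereas the paper simply quotes it.
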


\begin{proof}
Similarly to before, let the input to the product test be two copies $\psi_A$, $\psi_B$ of a state $\psi := \proj{\psi}$, and let $\mathcal{F}$ denote the swap operator that exchanges systems A and B. Then
\beas
\lefteqn{\bbE_{\ket{\phi_1},\dots,\ket{\phi_n}} \left[ |\braket{\psi}{\phi_1,\dots,\phi_n}|^4 \R]}\\
&=& \bbE_{\ket{\phi_1},\dots,\ket{\phi_n}} \left[ \tr (\psi_A \otimes \psi_B) ((\phi_1 \otimes \cdots \otimes \phi_n)_A \otimes (\phi_1 \otimes \cdots \otimes \phi_n)_B)\right] \\
&=& \tr (\psi_A \otimes \psi_B) \left(\bbE_{\ket{\phi}}\left[ \phi_A
    \otimes \phi_B \right]\right)^{\otimes n}\\ &=& \tr (\psi_A
\otimes \psi_B) \left(\frac{I+\mathcal{F}}{d(d+1)} \right)^{\otimes n}
= \left(\frac{2}{d(d+1)}\right)^n \Ptest(\proj{\psi}). 
\eeas
\end{proof}

We note that, in this interpretation, our main result is reminiscent of the so-called inverse theorem for the second Gowers uniformity norm \cite{gowers98,gowers01}, which we briefly outline. Let $f:\{0,1\}^n \rightarrow \bbR$ be some function such that $\frac{1}{2^n}\sum_x f(x)^2 = 1$, and let the $p$-norms of $f$ on the Fourier side be defined as
$\|\hat{f}\|_p = \left( \sum_{x \in \{0,1\}^n} \left| \frac{1}{2^n} \sum_{y \in \{0,1\}^n} (-1)^{x \cdot y} f(y) \right|^p \right)^{1/p}.$
Then it is straightforward to show that $\|\hat{f}\|_\infty^4 \le \|\hat{f}\|_4^4 \le \|\hat{f}\|_\infty^2$, where the quantity in the middle is known as the (fourth power of) the second Gowers uniformity norm of $f$. That is, $\|\hat{f}\|_\infty^2$ (representing the largest overlap of $f$ with a parity function) is well approximated by $\|\hat{f}\|_4^4$ (the {\em average} of the squared overlaps with parity functions). This simple approximation has proven useful in arithmetic combinatorics \cite{gowers98}.

Via the correspondence of Proposition \ref{prop:prodoverlaps}, Theorem
\ref{thm:prodtest} shows that a similar result holds if we replace the
cube $\{0,1\}^n$ with the space $(\bbC^d)^{\ot n}$: the largest
overlap with a product state can be well approximated by the average
squared overlap with product states. Note that if one attempts to use
the classical proof technique for the Gowers uniformity norm to prove
this result, one does not obtain Theorem \ref{thm:prodtest}, but a
considerably weaker result containing a term exponentially large in
$n$. Intuitively, this is because the set of overlaps with parity
functions for some function $f:\{0,1\}^n \rightarrow \bbR$ is
essentially arbitrary, whereas the set of overlaps of some state
$\ket{\psi}$ with product states is highly constrained. 


\section*{Acknowledgements}
We did most of this research while working at the University of Bristol.
AM was supported by the EC-FP6-STREP network QICS and an EPSRC
Postdoctoral Research Fellowship. AWH was supported by the EPSRC grant
``QIP-IRC'', by NSF
grants 0916400, 0829937, 0803478, DARPA QuEST contract FA9550-09-1-0044 and
the IARPA MUSIQC and QCS contracts.
 We would like to thank many people for inspiring discussions,
including Boaz Barak, Salman Beigi, Fernando Brand\~ao, Toby Cubitt, Sevag Gharibian, Leonid Gurvits, Julia Kempe, Hirotada Kobayashi,
Harumichi Nishimura, Thomas Vidick, Andreas Winter and Xiaodi Wu.
We would also like to thank the FOCS and JACM referees for their helpful comments.


\ifacm
\bibliographystyle{acmsmall}
\else
\bibliographystyle{abbrv}
\fi
\bibliography{thesis}


\end{document}